\documentclass[10pt, doublecolumn]{IEEEtran}
\usepackage{epsfig,latexsym}
\usepackage{float}
\usepackage{indentfirst}
\usepackage{amsmath}
\usepackage{bm}
\usepackage{amssymb}
\usepackage{times}
\usepackage{enumitem}

\usepackage{algorithm}
\usepackage[noend]{algpseudocode}
\usepackage{subfigure}
\usepackage{psfrag}
\usepackage{hyperref}
\usepackage{cite}
\usepackage{lastpage}
\usepackage{fancyhdr}
\usepackage{color} 
 \usepackage{amsthm}
\usepackage{bigints}
\usepackage{array}
\usepackage{booktabs}
\setlength{\heavyrulewidth}{1.5pt}
\setlength{\abovetopsep}{4pt}

\sloppy

\newtheorem{Lemma}{Lemma}
\newtheorem{Corollary}{Corollary}
\newtheorem{lemma}[Lemma]{$\mathbf{Lemma}$}

\newtheorem{corollary}[Corollary]{$\mathbf{Corollary}$}

\newcounter{problem}
\newcounter{save@equation}
\newcounter{save@problem}
\makeatletter
\newenvironment{problem}
{\setcounter{problem}{\value{save@problem}}%
  \setcounter{save@equation}{\value{equation}}%
  \let\c@equation\c@problem
  \subequations
}
{\endsubequations
  \setcounter{save@problem}{\value{equation}}%
  \setcounter{equation}{\value{save@equation}}%
}

\begin{document}
\title{  \vspace{-0.5em}\Huge{ Environment Division Multiple Access (EDMA): \\A Feasibility Study via Pinching Antennas  }}

\author{ Zhiguo Ding, \IEEEmembership{Fellow, IEEE}, Robert Schober, \IEEEmembership{Fellow, IEEE}, and H. Vincent Poor, \IEEEmembership{Life Fellow, IEEE}   \thanks{ 
  
\vspace{-1em}

Z. Ding is with the University
of Manchester, Manchester, M1 9BB, UK, and Khalifa University, Abu Dhabi, UAE.    
R. Schober is with the Institute for Digital Communications,
Friedrich-Alexander-University Erlangen-Nurnberg (FAU), Germany. H. V. Poor is  with the  Department of Electrical and Computer Engineering, Princeton University,
Princeton, NJ 08544, USA.
 

  }\vspace{-2.5em}}
 \maketitle

\begin{abstract}
This paper exploits the dynamic features of wireless propagation environments as the basis for a new multiple access technique, termed environment division multiple access (EDMA). In particular, with the proposed pinching-antenna-assisted EDMA, the multi-user propagation environment is intelligently reconfigured to improve signal strength at intended receivers and simultaneously suppress multiple-access interference, without requiring complex signal processing, e.g., precoding, beamforming, or multi-user detection. The key to creating a favorable propagation environment is to utilize the capability of pinching antennas to reconfigure line-of-sight (LoS) links, e.g., pinching antennas are placed at specific locations, such that interference links are blocked on purpose. Based on a straightforward choice of pinching-antenna locations, the ergodic sum-rate gain of EDMA over conventional multiple access and the probability that EDMA achieves a larger instantaneous sum rate than the considered benchmarking scheme are derived in closed form. The obtained analytical results demonstrate the significant potential of EDMA for supporting multi-user communications. Furthermore, pinching antenna location optimization is also investigated, since the locations of pinching antennas are critical for reconfiguring LoS links and large-scale path losses. Two low-complexity algorithms are developed for uplink and downlink transmission, respectively, and simulation results are provided to show their optimality in comparison to exhaustive searches.  
\end{abstract}\vspace{-0.2em}

\begin{IEEEkeywords}
Environment-division multiple access (EDMA), pinching antennas, antenna location optimization, line-of-sight blockage, uplink and downlink sum rates.  
\end{IEEEkeywords}
\vspace{-1em} 
\section{Introduction}
Multiple access techniques are a cornerstone of modern wireless communication systems, as they are key to ensuring that multiple users can be served efficiently given the scarce bandwidth resources  \cite{mojobabook}. There are two important metrics for measuring the success of a multiple access technique. One is the mitigation of multiple-access interference, and the other one is that each user is allocated as many bandwidth resources as possible. For example, by relying on orthogonal bandwidth resources available in the time and frequency domains,  time-division multiple access (TDMA), frequency-division multiple access (FDMA), and orthogonal frequency-division multiple access (OFDMA) can effectively suppress multiple-access interference, but each user can use only a portion of the available bandwidth resources \cite{Rappaport}.  By exploiting the code domain, code-division multiple access (CDMA) can effectively suppress multiple-access interference and also ensure that each user has access to the whole available bandwidth \cite{Verduebook,909605}. Similarly, by exploiting the spatial and power domains, the use of space-division multiple access (SDMA) and non-orthogonal multiple access (NOMA) can also ensure that multiple users can simultaneously use the same bandwidth resources  \cite{jsacnomaxmine}.


This paper is motivated by the question of whether there are additional degrees of freedom that can be exploited beyond the conventional time, frequency, code, spatial, and power domains, and proposes a new type of multiple access technique, termed environment division multiple access (EDMA). In particular, the key features of wireless propagation environments, such as line-of-sight (LoS) blockage and large-scale path losses, are shown in the paper to be useful for realizing multiple access. We note that conventionally, the wireless propagation environment was treated as fixed and uncontrollable, and only recently,  it has become a reconfigurable parameter, thanks to several emerging flexible-antenna techniques, including reconfigurable intelligent surfaces (RISs), movable antennas, fluid antennas, and pinching antennas \cite{irs1,irs2,10318061,9264694,pinching_antenna2,mypa}. Take pinching antennas as an example, which are well-known for their capability to intelligently reconfigure users' propagation environments and to create favorable large-scale path losses and small-scale multi-path fading gains \cite{pinching_antenna2,mypa,GPASS}. Recent studies have demonstrated that pinching antennas can be used to significantly improve system throughput, communication security, and sensing capability of wireless networks \cite{pamagazine,10976621,11169486,11202497,robertisac1,yarupass}.  Because of their capabilities to reconfigure wireless propagation environments, pinching antennas can also be exploited for new multiple access techniques. One example of EDMA is waveguide division multiple access (WDMA), where dielectric waveguide-based pinching antennas are used to realize a new form of SDMA, and so-called pinching beamforming is employed to suppress multiple-access interference \cite{wdma1}. Fluid antennas can also be leveraged for fluid-antenna multiple access (FAMA), where each user's fluid antenna is moved on a wavelength scale to create favorable small-scale multi-path fading for mitigating multiple-access interference \cite{9650760}. 

Unlike these existing multiple access techniques, this paper proposes a new type of EDMA by focusing on LoS blockage, which is a key feature of millimeter-wave (mmWave) and terahertz (THz) communications. In particular, in mmWave and THz networks, the LoS link between the transceivers is critical, and can be more than $20$ dB stronger than the non-LoS links \cite{6363891}. Although LoS blockage has traditionally been viewed as a harmful effect, it can in fact be exploited to effectively suppress co-channel interference \cite{11036558, kaidlos}. For example, in many communication scenarios with structured obstacles, e.g., tall shelving units in an Internet of Things (IoT) warehouse, as shown in Fig. \ref{fig1a}, the use of pinching antennas can ensure that interfering LoS links are effectively blocked. As a result, each user can enjoy interference-free transmission, and also has access to the whole available bandwidth. Furthermore, the use of LoS blockage for designing multiple access yields low system complexity, since the transceivers do not need to carry out complex signal processing, such as precoding, beamforming, multi-user detection, etc. To explore this idea further, the feasibility of using LoS blockage to realize EDMA in pinching-antenna systems is investigated in this paper, the contributions of which are as follows:

 \begin{figure}[t] \vspace{-0.2em}
\begin{center}
\subfigure[Potential application of EDMA]{\label{fig1a}\includegraphics[width=0.4\textwidth]{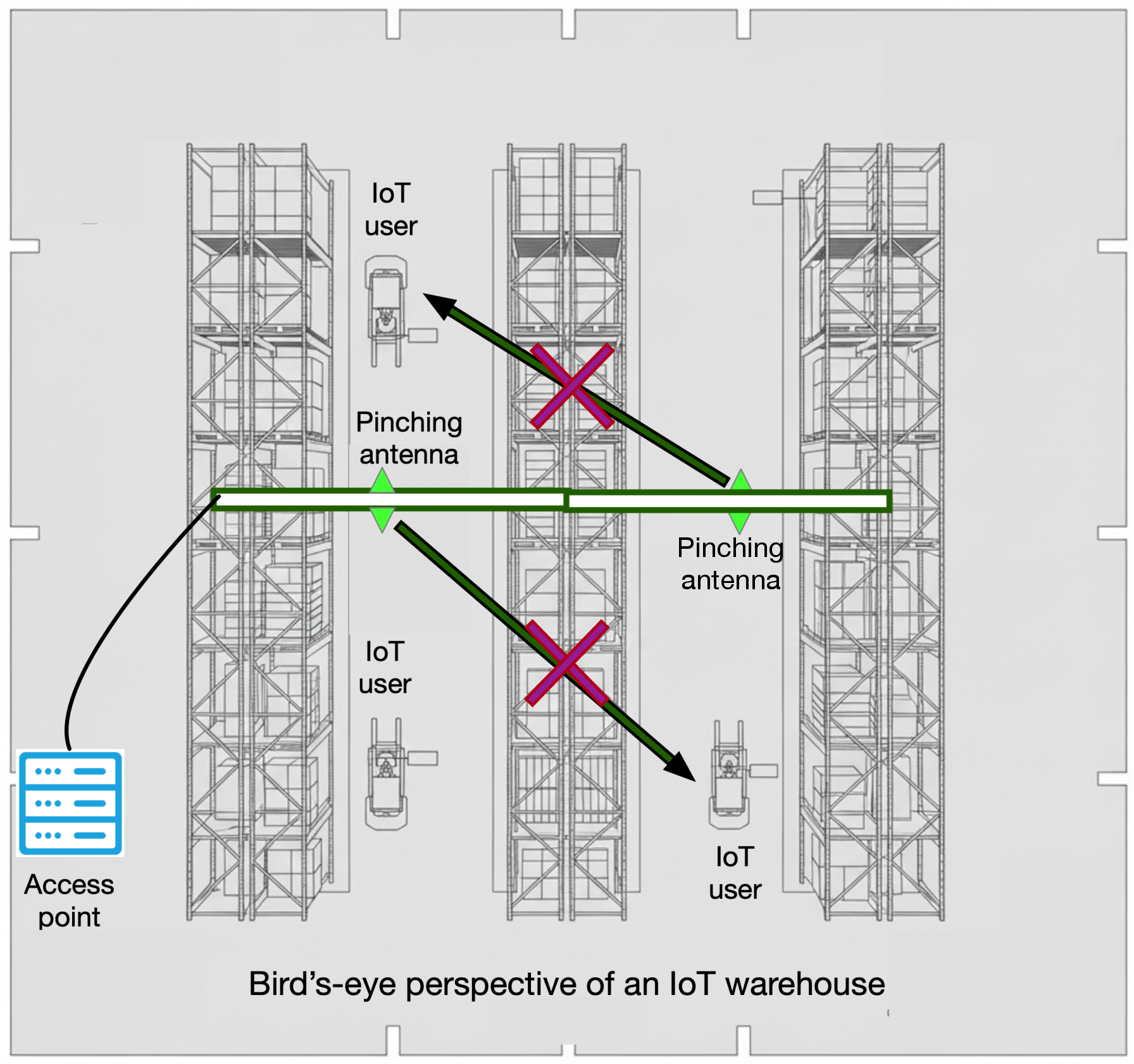}}  
\subfigure[System diagram of EDMA]{\label{fig1b}\includegraphics[width=0.40\textwidth]{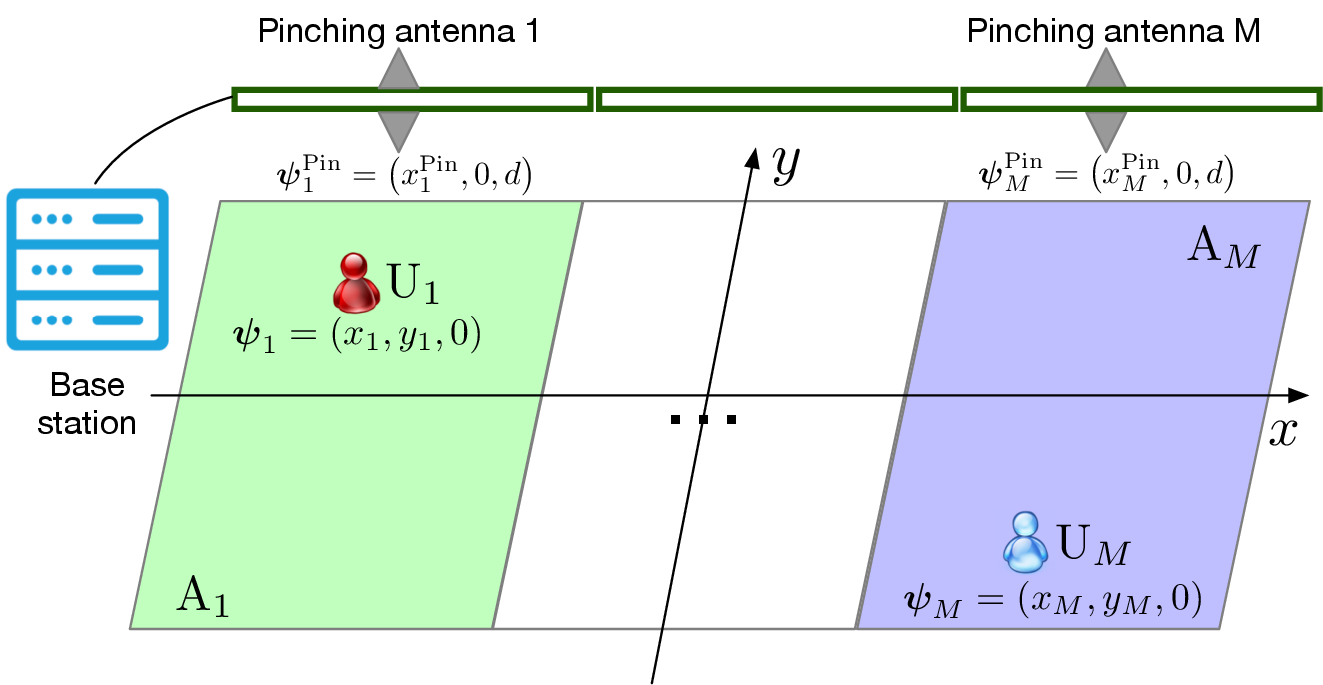}} 
\end{center}\vspace{-1em}
\caption{Illustration of the considered multi-user EDMA system. The Internet of Things (IoT) application figure in Fig. \ref{fig1a} is based on an image generated by Google Gemini.  \vspace{-1em} }\label{fig1}\vspace{-1.2em}
\end{figure}

\begin{itemize}
\item System models for both uplink and downlink EDMA are first developed in this paper. In particular, a base station is equipped with a segmented dielectric waveguide and serves multiple users simultaneously. On each waveguide segment, a single pinching antenna is activated to ensure that the following two purposes are simultaneously achieved. First, the pinching antenna is to be placed as close as possible to the user it serves, which is useful for reducing large-scale path losses and hence increases the signal strength for the considered uplink/downlink transmission. Second, the pinching antenna is to be placed in a position that ensures that its associated interference links suffer most likely from LoS blockage, in order to effectively mitigate multiple-access interference. 

\item The performance gain of EDMA over conventional multiple access techniques is investigated, where pinching-antenna-assisted TDMA is used as the benchmarking scheme, and the sum rate gain of EDMA over TDMA is used as the performance metric. To facilitate the performance analysis, the location of each pinching antenna is fixed based on the location of the user it serves. We note that this simple choice is not optimal, but has the benefit that the ergodic uplink and downlink sum rates of EDMA are identical. Due to the complex nature of EDMA performance analysis, a special case with two users is focused on, where a closed-form expression for the ergodic uplink/downlink sum-rate gain over TDMA can be obtained. In addition, an asymptotic study is carried out to establish the conditions under which EDMA achieves a larger ergodic sum rate than TDMA. Furthermore, a closed-form expression for the probability that EDMA achieves a larger instantaneous sum rate than TDMA is developed, which clearly demonstrates the impact of the service area size and the LoS blockage parameter on the performance gain of EDMA over TDMA. 

\item The locations of the pinching antennas are crucial to the performance of blockage-assisted EDMA, and hence pinching-antenna location optimization is also investigated in this paper. For the uplink case, an optimization problem for the joint optimization of the locations of the pinching antennas is formulated. Then, it is shown that the problem can be decomposed into multiple subproblems, where the locations of the pinching antennas can be individually optimized. Analytical results are developed to show that the objective function of each subproblem is a unimodal function. This observation motivates the proposed golden section search-based algorithm, which is shown to be capable of achieving the optimal performance. Unlike the uplink case, the locations of the pinching antennas have to be jointly designed for the downlink case. The convexity and concavity of the constraints of the formulated optimization problem are first analyzed, which facilitates the design of the proposed successive convex approximation (SCA) based algorithm. Simulation results are provided to show that the proposed SCA algorithm can achieve the same performance as an exhaustive search. 
\end{itemize}


\vspace{-1em}

\section{System Model}\label{section model}
In this paper, a multi-user pinching-antenna assisted EDMA system is considered, as shown in Fig. \ref{fig1b}. The key idea of this new multiple access technique is to leverage the capability of pinching antennas to reconfigure wireless propagation environments and partition the overall service area into isolated regions. In this way, a user served in one region does not interfere with others, which means that multiple users can be served simultaneously in an interference-free manner. In particular, assume that the service area is divided into $M$ smaller non-overlapping regions, denoted by ${\rm A}_m$, $1\leq m\leq M$, \cite{mypa}. The base station is equipped with a segmented dielectric waveguide, i.e., the waveguide is divided into $M$ segments. It is assumed that each segment is equipped with its own feed and a dedicated radio-frequency chain, which ensures that signals received by different segments do not interfere with each other \cite{Chongjunpa1xdd}. Without loss of generality, assume that a single user is scheduled in each of the small regions, and served by a single pinching antenna activated on the corresponding segment. Denote the user scheduled in ${\rm A}_m$ by ${\rm U}_m$, and its corresponding pinching antenna by  ${\rm PA} _m^{\rm Pin}$, $1\leq m \leq M$. Assume that ${\rm U}_m$ is uniformlly distributed within  ${\rm A}_m$.

Denote the sides of the rectangular-shaped service area by $D_{\rm L}$ and $D_{\rm W}$, respectively. A Cartesian coordinate system is used, where the center location of the service area is denoted by ${\boldsymbol \psi} _0=(0,0,0)$, the location of  ${\rm PA} _m^{\rm Pin}$ is denoted by ${\boldsymbol \psi} _m^{\rm Pin}=(x_m^{\rm Pin},0,d)$,  ${\rm U}_m$'s location is denoted by ${\boldsymbol \psi} _m=(x_m,y_m,0)$, $x_m\leq x_{m+1}$, and $d$ denotes the height of the waveguide. For illustration purposes, the sizes of all ${\rm A}_m$, $1\leq m \leq M$, are identical, which means that $-\frac{{\rm D}_W}{2}\leq y_m\leq \frac{{\rm D}_W}{2}$, and $x_m^s\leq y_m\leq x_m^e$, where $x_m^s$ and $x_m^e$ denote the boundaries of ${\rm A}_m$ along the x-axis in Fig. \ref{fig1b}.

The uplink signal received by  ${\rm PA} _m^{\rm Pin}$ is given by
 \begin{align}\label{models}
  y_m^{\rm Pin} =&   \alpha_{mm} {h}_{mm}   s_m +\sum_{i\neq m}  \alpha_{im}{h}_{im}  s_i +w_m^{\rm Pin} ,
  \end{align}
  where $s_m$ denotes the signal sent by  ${\rm U}_m$, $w_m^{\rm Pin}$ denotes the additive white Gassian noise at  ${\rm PA} _m^{\rm Pin}$, $h_{im}$ denotes the channel between ${\rm U}_i$ and ${\rm PA} _m^{\rm Pin}$ modelled as follows: 
 \begin{align}
 h_{im}=\frac{\sqrt{\eta} e^{-2\pi j \left(\frac{  1}{\lambda}\left| {\boldsymbol \psi}_i  - {\boldsymbol \psi}_m^{\rm Pin}\right|
  +\frac{1}{\lambda_g}\left| {\boldsymbol \psi}_{0m}^{\rm Pin}  - {\boldsymbol \psi}_m^{\rm Pin}\right|
  \right)}}{  \left| {\boldsymbol \psi} _i - {\boldsymbol \psi}_m^{\rm Pin}\right|} .
  \end{align}
Here,  ${\boldsymbol \psi}_{0m}^{\rm Pin} $ denotes the location of the feed point of the $m$-th waveguide segment, $\eta = \frac{c^2}{16\pi^2 f_c^2 }$, $c$ is the speed of light, $f_c$ is the carrier frequency, $\lambda=\frac{c}{f_c}$, $\lambda_g$ denotes the waveguide wavelength, and $\alpha_{im}$ is the associated indicator function for LoS blockage, e.g., $\alpha_{im}=1$ if there is an LoS link between ${\rm U}_i$ and ${\rm PA} _m^{\rm Pin}$, and $\alpha_{im}=0$ otherwise. In particular, this paper adopts the following LoS blockage model for ultra-dense indoor environments \cite{3gppblock}:
\begin{align}\label{blockage2}
 \mathbb{P}(\alpha_{im}=1) = e^{-\phi | \boldsymbol{\psi}_i-\boldsymbol{\psi}^{\rm Pin}_m |^2 } ,
\end{align}
where $\phi$ is the LoS blockage system parameter.   
  Therefore, ${\rm U}_m$'s achievable uplink data rate via EDMA is given by
  \begin{align}\label{eq4}
  R_m^{\rm UL} =& \log_2\left(
1+ \frac{ \alpha_{mm}\left|h_{mm} \right|^2}
{\sum_{i\neq m}  \alpha_{im}\left|
h_{im}   
\right|^2+\frac{1}{\rho}}
\right),
\end{align}
  where each user is assumed to use the same transmit power, and $\rho$ denotes the transmit signal-to-noise ratio (SNR). Similarly,  ${\rm U}_m$'s achievable downlink data rate can be obtained as follows: $  R_m^{\rm DL} = \log_2\left(
1+ \frac{ \alpha_{mm}\left|h_{mm} \right|^2}
{\sum_{i\neq m}  \alpha_{mi}\left|
h_{mi}   
\right|^2+\frac{1}{\rho}}
\right)$
  
 {\it Remark 1:} The use of segmented waveguides has two benefits. First, maintenance costs are reduced, e.g., if one segment is broken, only that single segment needs to be replaced, instead of the entire waveguide. Second, signals received by different segments are naturally separated, which increases the available degrees of freedom. However, the use of segmented waveguides requires each segment to be equipped with a dedicated feed, which increases system complexity. We note that an alternative is to use multiple waveguides without the constraint that they are on a straight line, which yields further flexibility for the system design but is beyond the scope of this paper.

      \begin{figure}[t]\centering \vspace{-0.2em}
    \epsfig{file=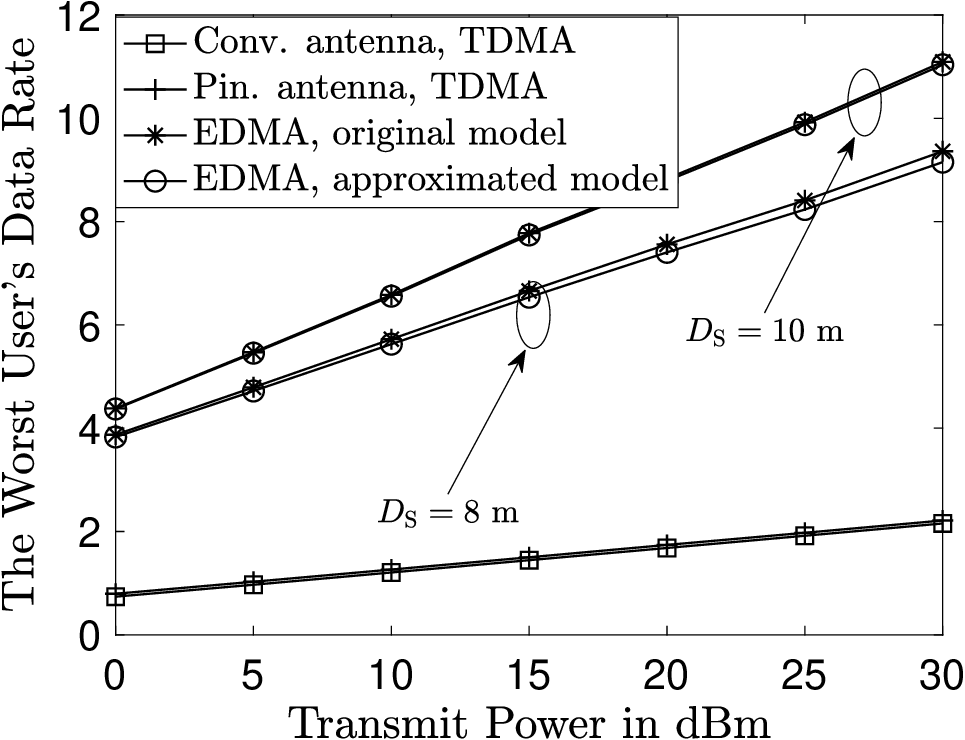, width=0.4\textwidth, clip=}\vspace{-0.5em}
\caption{Illustration of the accuracy of the approximated system model. The cases with and without taking into account interfering pinching antennas that are more than two segments away are shown. An uplink scenario with $M=5$ users is considered, $D_{\rm S}$ denotes the segment width, $D_{\rm W}=10$ m, $d=3$ m, $f_c=28$ GHz, $\phi=0.02$, and $D_{\rm L}=MD_{\rm S}$,   and the noise power is $-90$ dBm. The figure focuses on the worst user, ${\rm U}_3$, which suffers the most interference.
  \vspace{-1em}    }\label{fig2}   \vspace{-0.5em} 
\end{figure}

  \section{Sum-Rate Performance Analysis}\label{section analysis}
The aim of this section is to identify the performance gain of EDMA over pinching-antenna-assisted conventional TDMA, where the sum rate is used as the performance metric. 

To facilitate the performance analysis, we assume that the LoS link between a user and its dedicated pinching antenna always exists, i.e., $\alpha_{mm}=1$, which can be justified by the small distance between the user and the pinching antenna, as shown in Fig. \ref{fig1a}. We further assume that the LoS link between a user and an interfering pinching antenna which is more than two segments away does not exist due to the large distance\footnote{We note that for the cases with $M$, $M\leq 3$, there is no need for this assumption. We also note that the adopted assumption is analogous to the FDMA case, where only the interference from adjacent channels needs to be taken into account. }. Fig. \ref{fig2} shows the sum rates achieved with and without taking into account those interfering pinching antennas that are more than two segments away. As can be seen from Fig. \ref{fig2}, the used approximation results in negligible performance loss.

Therefore, by adopting the aforementioned approximation and the LoS blockage model shown in \eqref{blockage2}, for a user which has two neighboring users on each side, its uplink data rate achieved by EDMA can be simplified as follows: 
   \begin{align}\nonumber
   {R}_m^{\rm UL} &=   \mathbb{P}_{m-1}\mathbb{P}_{m+1}   \log_2\left(
1+ \frac{  \left|h_{mm} \right|^2}
{ \left|
h_{(m-1)m}   
\right|^2+\left|
h_{(m+1)m}   
\right|^2+\frac{1}{\rho}}
\right)\\\nonumber 
&+ \mathbb{P}_{m-1}\left(1- \mathbb{P}_{m+1} \right) \log_2\left(
1+ \frac{  \left|h_{mm} \right|^2}
{ \left|
h_{(m-1)m}   
\right|^2 +\frac{1}{\rho}}
\right)\\\nonumber
&+\left(1- \mathbb{P}_{m-1} \right)\mathbb{P}_{m+1}  \log_2\left(
1+ \frac{  \left|h_{mm} \right|^2}
{  \left|
h_{(m+1)m}   
\right|^2+\frac{1}{\rho}}
\right)
\\\label{eq5} &+
 \left(1- \mathbb{P}_{m-1} \right) \left(1-\mathbb{P}_{m+1}\right) \log_2\left(
1+ \rho \left|h_{mm} \right|^2
\right),
\end{align} 
where $\mathbb{P}_{m-1}=e^{-\phi | \boldsymbol{\psi}_{m-1}-\boldsymbol{\psi}^{\rm Pin}_{m}|^2 } $, and $\mathbb{P}_{m+1}=e^{-\phi | \boldsymbol{\psi}_{m+1}-\boldsymbol{\psi}^{\rm Pin}_{m}|^2 } $. The expression for the EDMA downlink data rate can be obtained similarly.   For users that have a single neighbouring user, e.g., ${\rm U}_1$ and ${\rm U}_M$, their data rates can be obtained in a straightforward manner.

To facilitate the performance analysis, a simple intuitive choice of the pinching antenna location is used, i.e., $\boldsymbol{\psi}^{\rm Pin}_{m} =(x_m,0,d)$. This choice can be justified by the fact that if the users are sufficiently separated, $ {R}_m^{\rm i}  \approx
  \log_2\left(
1+ \rho \left|h_{mm} \right|^2
\right)$, which means that $\boldsymbol{\psi}^{\rm Pin}_{m} =(x_m,0,d)$ is the optimal location choice. In the next section, more sophisticated choices of  $\boldsymbol{\psi}^{\rm Pin}_{m} $ will be considered.

With $\boldsymbol{\psi}^{\rm Pin}_{m} =(x_m,0,d)$, ${\rm U}_m$'s achievable uplink and downlink data rates can be expressed explicitly as shown in \eqref{eq5xx} and \eqref{eq5xxd} at the top of the next page. 
\begin{figure*}\vspace{-2em}
{\small  \begin{align}\nonumber
   {R}_m^{\rm UL} =&    e^{-\phi\left((x_m-x_{m-1})^2+y_{m-1}^2+d^2\right) }    e^{-\phi\left((x_m-x_{m+1})^2+y_{m+1}^2+d^2\right) }   \log_2\left(
1+ \frac{   \frac{\rho\eta}{y_m^2 +d^2}}
{   \frac{\rho\eta}{(x_m-x_{m-1})^2+ y_{m-1}^2 +d^2}+ \frac{\rho\eta}{(x_m-x_{m+1})^2+ y_{m+1}^2 +d^2}+1}
\right)\\\nonumber 
&+
 \left(1-e^{-\phi\left((x_m-x_{m-1})^2+y_{m-1}^2+d^2\right) }\right)    e^{-\phi\left((x_m-x_{m+1})^2+y_{m+1}^2+d^2\right) }   \log_2\left(
1+ \frac{   \frac{\rho\eta}{y_m^2 +d^2}}
{   \frac{\rho\eta}{(x_m-x_{m+1})^2+ y_{m+1}^2 +d^2}+1}
\right)
\\\nonumber
&+ e^{-\phi\left((x_m-x_{m-1})^2+y_{m-1}^2+d^2\right) }   \left(1- e^{-\phi\left((x_m-x_{m+1})^2+y_{m+1}^2+d^2\right) }  \right) \log_2\left(
1+ \frac{   \frac{\rho\eta}{y_m^2 +d^2}}
{   \frac{\rho\eta}{(x_m-x_{m-1})^2+ y_{m-1}^2 +d^2} +1}
\right)
\\\label{eq5xx} &+
  \left(1-e^{-\phi\left( (x_m-x_{m-1})^2+y_{m-1}^2+d^2 \right)}\right) \left(1- e^{-\phi\left((x_m-x_{m+1})^2+y_{m+1}^2+d^2\right) }  \right)  \log_2\left(
1+  \frac{\rho\eta}{y_m^2 +d^2}
\right),\\\nonumber
   {R}_m^{\rm DL} =&    e^{-\phi\left((x_m-x_{m-1})^2+y_{m}^2+d^2\right) }    e^{-\phi\left((x_m-x_{m+1})^2+y_{m}^2+d^2\right) }   \log_2\left(
1+ \frac{   \frac{\rho\eta}{y_m^2 +d^2}}
{   \frac{\rho\eta}{(x_m-x_{m-1})^2+ y_{m}^2 +d^2}+ \frac{\rho\eta}{(x_m-x_{m+1})^2+ y_{m}^2 +d^2}+1}
\right)\\\nonumber 
&+
 \left(1-e^{-\phi\left((x_m-x_{m-1})^2+y_{m}^2+d^2\right) }\right)    e^{-\phi\left((x_m-x_{m+1})^2+y_{m}^2+d^2\right) }   \log_2\left(
1+ \frac{   \frac{\rho\eta}{y_m^2 +d^2}}
{   \frac{\rho\eta}{(x_m-x_{m+1})^2+ y_{m}^2 +d^2}+1}
\right)
\\\nonumber
&+ e^{-\phi\left((x_m-x_{m-1})^2+y_{m}^2+d^2\right) }   \left(1- e^{-\phi\left((x_m-x_{m+1})^2+y_{m}^2+d^2\right) }  \right) \log_2\left(
1+ \frac{   \frac{\rho\eta}{y_m^2 +d^2}}
{   \frac{\rho\eta}{(x_m-x_{m-1})^2+ y_{m}^2 +d^2} +1}
\right)
\\\label{eq5xxd} &+
   \left(1-e^{-\phi\left( (x_m-x_{m})^2+y_{m-1}^2+d^2 \right)}\right)  \left(1- e^{-\phi\left((x_m-x_{m+1})^2+y_{m}^2+d^2\right) }  \right)\log_2\left(
1+  \frac{\rho\eta}{y_m^2 +d^2}
\right).
\end{align}}\vspace{-2em} 
\end{figure*} 
The aim of this section is to identify the sum-rate performance gain of EDMA over TDMA, which is defined as follows:
\begin{align}
\mathcal{E}\left\{\Delta^{\rm i}\right\}\triangleq &   \mathcal{E}\left\{R^{\rm EDMA}_{\rm sum,i}\right\} -  \mathcal{E}\left\{R^{\rm TDMA}_{\rm sum,i}  \right\},
 \end{align}
where ${\rm i} \in\{{\rm UL},{\rm DL}\}$, $\mathcal{E}\left\{\cdot\right\}$ denotes the expectation operation with respect to the random channel gains, $R^{\rm EDMA}_{\rm sum,i}=\sum^{M}_{m=1} {R}_m^{\rm i} $. It is straightforward to show that the uplink and downlink sum rates achieved by pinching-antenna-assisted TDMA are identical, and can be expressed as follows:
 \begin{align}
 R^{\rm TDMA}_{\rm sum,i} =\frac{1}{M}\sum^{M}_{m=1}
 \log_2\left(
1+ \rho \frac{\eta}{y_m^2 +d^2}
\right),
 \end{align}
  where ${\rm i} \in\{{\rm UL},{\rm DL}\}$, and the factor $\frac{1}{M}$ is needed since each user has access to a portion of the whole time duration only. 

By studying the difference between the two sum-rate expressions shown in \eqref{eq5xx} and \eqref{eq5xxd}, and also using the assumption that $y_m$, $1\leq m \leq M$, are independent and identically distributed (i.i.d.), the following corollary can be obtained in a straightforward manner.
\begin{corollary}\label{corollary1}
Assming that $y_m$, $1\leq m \leq M$, are i.i.d., the following equality holds:
 \begin{align}
\mathcal{E}\left\{R^{\rm EDMA}_{\rm sum,UL}\right\} =\mathcal{E}\left\{R^{\rm EDMA}_{\rm sum,DL}\right\} .
 \end{align}
\end{corollary}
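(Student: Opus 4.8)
The plan is to work directly from the closed-form rate expressions \eqref{eq5xx} and \eqref{eq5xxd} and compare them. First I would observe that, for an interior user ${\rm U}_m$ (one with a neighbour on each side), $R_m^{\rm UL}$ and $R_m^{\rm DL}$ have exactly the same shape: each is a sum of four $\log_2(\cdot)$ terms, weighted respectively by $\mathbb{P}_{m-1}\mathbb{P}_{m+1}$, $(1-\mathbb{P}_{m-1})\mathbb{P}_{m+1}$, $\mathbb{P}_{m-1}(1-\mathbb{P}_{m+1})$ and $(1-\mathbb{P}_{m-1})(1-\mathbb{P}_{m+1})$, with the desired-signal quantity $\frac{\rho\eta}{y_m^2+d^2}$ appearing inside every logarithm in both cases. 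The only difference between the two expressions is the ordinate carried by the quantities attached to the left and right interfering nodes: in the uplink these are $(x_m-x_{m-1})^2+y_{m-1}^2+d^2$ and $(x_m-x_{m+1})^2+y_{m+1}^2+d^2$, i.e.\ they involve the ordinates $y_{m-1}$ and $y_{m+1}$ of the interfering users ${\rm U}_{m-1}$ and ${\rm U}_{m+1}$, whereas in the downlink the corresponding quantities are $(x_m-x_{m-1})^2+y_m^2+d^2$ and $(x_m-x_{m+1})^2+y_m^2+d^2$, i.e.\ they involve ${\rm U}_m$'s own ordinate $y_m$. The boundary users ${\rm U}_1$ and ${\rm U}_M$ are handled in the same way, each having only two $\log_2(\cdot)$ terms and a single interferer.

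Next I would take expectations of $R^{\rm EDMA}_{\rm sum,i}=\sum_{m=1}^{M}R_m^{\rm i}$ and use linearity, reducing the claim to showing $\mathcal{E}\{R_m^{\rm UL}\}=\mathcal{E}\{R_m^{\rm DL}\}$ for each $m$. By the previous paragraph, $R_m^{\rm UL}$ and $R_m^{\rm DL}$ are the \emph{same} function of the triple consisting of the desired-signal ordinate and the two interference-related squared distances; passing from uplink to downlink just replaces $y_{m-1}$ and $y_{m+1}$ by $y_m$ inside those two squared distances. The hypothesis now enters: $y_{m-1},y_m,y_{m+1}$ are i.i.d.\ and are independent of the horizontal coordinates $x_{m-1},x_m,x_{m+1}$ (hence of the squared horizontal gaps $(x_m-x_{m-1})^2$ and $(x_m-x_{m+1})^2$). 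The plan is to use this i.i.d.\ structure to argue that the substitution leaves the expectation unchanged, so that $\mathcal{E}\{R_m^{\rm UL}\}=\mathcal{E}\{R_m^{\rm DL}\}$; summing over $m$ then yields the corollary.

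The step I expect to be the main obstacle is precisely this substitution. In the downlink a single ordinate $y_m$ drives the desired-signal term and both interference-plus-blockage terms simultaneously, whereas in the uplink three statistically independent copies play those roles; arguing carefully that this does not change the expectation is exactly where the i.i.d.\ assumption is used and is the only nontrivial point. Everything else — expanding the four (respectively two) LoS/blockage cases in \eqref{eq5xx} and \eqref{eq5xxd}, verifying that the desired-signal term and the weight pattern coincide between uplink and downlink, and treating the two boundary users separately — is routine bookkeeping.
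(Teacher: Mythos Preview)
Your plan follows exactly the route the paper sketches: compare \eqref{eq5xx} with \eqref{eq5xxd} and invoke the i.i.d.\ assumption on the $y_m$'s. You have also correctly isolated the only nontrivial step, namely the ``ordinate substitution'' that turns $R_m^{\rm UL}=F(y_{m-1},y_m,y_{m+1})$ into $R_m^{\rm DL}=F(y_m,y_m,y_m)$ with the same $F$.

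The gap is that this substitution does \emph{not} preserve expectations, and the i.i.d.\ hypothesis alone cannot close it. For i.i.d.\ $Y_1,Y_2,Y_3$ one has $\mathcal{E}[F(Y_1,Y_2,Y_3)]=\mathcal{E}[F(Y_2,Y_2,Y_2)]$ only under strong structural conditions on $F$ (for instance additive separability), which the rate expressions in \eqref{eq5xx}--\eqref{eq5xxd} do not satisfy. A one-line counterexample: with $F(a,b,c)=b\,(a+c)$ and zero-mean unit-variance $Y_i$, the independent version gives $0$ while the coupled version gives $2$. Concretely, in the downlink the desired-signal factor $\log_2\!\big(1+\rho\eta/(y_m^2+d^2)\big)$ and the two blockage/interference factors all share the same $y_m$, so the expectation of their product does \emph{not} factorize; in the uplink the three ordinates are independent and it does. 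The difference between the two expectations is precisely a covariance term among these factors, which is generically nonzero. Consequently the per-user equality $\mathcal{E}\{R_m^{\rm UL}\}=\mathcal{E}\{R_m^{\rm DL}\}$ that your reduction targets need not hold, and the same covariance obstruction survives after summing over $m$. Your proposal flags this as ``the main obstacle'' but supplies no mechanism to overcome it; the paper, for its part, offers no argument beyond the assertion that the comparison is ``straightforward'', so there is nothing further to lean on there either.
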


Because of the uplink-downlink equivalence shown in Corollary \ref{corollary1}, the uplink case is focused on in the following, where the subscript, ${\rm UL}$, is omitted to simplify the notations. For the case with large $\phi$ and $D_{\rm L}$, it is likely that interference links suffer from LoS blockage, which means that the first three terms in \eqref{eq5xx} are negligible and hence the sum rate achieved by EDMA can be approximated as follows:
  \begin{align}\nonumber
    R^{\rm EDMA}_{\rm sum, LB} =  &\sum^{M}_{m=1}
 \left(1- e^{-\phi\left((x_m-x_{m+1})^2+y_{m+1}^2+d^2\right) }  \right) \\\nonumber &\times \left(1-e^{-\phi\left( (x_m-x_{m-1})^2+y_{m-1}^2+d^2 \right)}\right)\\  &\times  \log_2\left(
1+ \rho \frac{\eta}{y_m^2 +d^2}
\right). \label{approx}
 \end{align}
The approximation is a lower bound on $    R^{\rm EDMA}_{\rm sum}$, and becomes tight for the case with large $\phi$ and $D_{\rm L}$. The accuracy of this approximation will be studied in Section \ref{section simulations}.

Therefore, the instantaneous uplink sum-rate gain of EDMA over TDMA, $  R^{\rm EDMA}_{\rm sum} -  R^{\rm TDMA}_{\rm sum} $, can be lower bound as follows: 
   \begin{align}
\Delta_{\rm sum}^{\rm LB} = &   \sum^{M}_{m=1} \log_2\left(
1+ \rho \frac{\eta}{y_m^2 +d^2}
\right)\\  &\times \nonumber 
\left[\left(1- e^{-\phi \left((x_m-x_{m+1})^2+y_{m+1}^2+d^2\right) }  \right)\right.\\\nonumber &\left. \left(1-e^{-\phi \left((x_m-x_{m-1})^2+y_{m-1}^2+d^2\right) }\right)-\frac{1}{M}\right].
 \end{align}

It is challenging to develop a closed-form expression for the ergodic sum-rate gain, as it is a function of a large number of random variables, which motivates the study of the special cases in the following subsections.

{\it Remark 2:} The equivalence between the uplink and downlink cases shown in Corollary \ref{corollary1} is mainly due to the adopted choice of the pinching antenna locations. With different choices of the antenna locations, the uplink and downlink sum rates can be different, as shown in Section \ref{section optimization}.  
  \vspace{-1em}
\subsection{Special Case with $M=2$}
For the case with two users, the lower bound on the uplink sum-rate gain of EDMA over TDMA can be simplified as follows: 
 \begin{align}
\Delta_{\rm sum}^{\rm LB}
=&   \left(\frac{1}{2}-e^{-\phi\left(\left( x_2-x_1\right)^2+y_2^2 +d^2\right) } \right)
   \log_2\left(
1+ \rho \frac{\eta}{ y_1^2 +d^2}
\right)\\\nonumber &+ \left(\frac{1}{2}-e^{-\phi\left(\left(x_2-x_1\right)^2+y_1^2 +d^2\right) } \right)
   \log_2\left(
1+ \rho \frac{\eta}{ y_2^2 +d^2}
\right).
 \end{align}

 It is interesting to note that $\Delta_{\rm sum}^{\rm LB}$ depends on the difference between $x_1$ and $x_2$, rather than on the individual values of $x_1$ and $x_2$. 
Unlike \cite{mypa} where the users are i.i.d. deployed in the same service area, for the case considered in this paper, $x_1$ and $x_2$ are not identically distributed, e.g., $x_1^s\leq x_1\leq x_1^e\leq x_2^s\leq x_2\leq x_2^e$. However, by using the fact that the two users are uniformly distributed in $\rm{A}_1$ and $\rm{A}_2$, respectively, and also following steps similar to those in \cite{mypa}, the probability density function (pdf) of $x_2-x_1$ can be obtained as follows: 
\begin{align}\label{z1z2}
f_{x_2-x_1}(z) =
\begin{cases}
\frac{4z}{D_{\rm L}^2}, & 0 \le z \le \frac{D_{\rm L}}{2}\\[6pt]
\frac{4(D_{\rm L} - z)}{D_{\rm L}^2}, & \frac{D_{\rm L}}{2} \le z \le D_{\rm L}
\end{cases}.
\end{align}

Therefore, the lower bound on the ergodic uplink sum-rate gain of EDMA over TDMA can be expressed as follows:
    \begin{align}
\bar{\Delta}_{\rm sum}^{\rm LB}=  &2\left(\frac{1}{2}-e^{-\phi d^2  } \int^{\frac{D_{\rm W}}{2}}_{-\frac{D_{\rm W}}{2}}    e^{-\phi  y_2^2  } \frac{1}{D_{\rm W}}dy_2 T_1\right)\\\nonumber &\times
\int^{\frac{D_{\rm W}}{2}}_{-\frac{D_{\rm W}}{2}}    \log_2\left(
1+ \rho \frac{\eta}{ y_1^2 +d^2}
\right)\frac{1}{D_{\rm W}}dy_1 ,
 \end{align}
 where $T_1= \int^{D_{\rm L}}_{0}e^{-\phi z^2 }  f_{x_2-x_1}(z)dz$, and the fact that $y_1$ and $y_2$ are i.i.d. is used. 
 
 By using the pdf of $x_2-x_1$, $T_1$ can be evaluated as follows:
 \begin{align}\nonumber
T_1 
  =& \int^{\frac{D_{\rm L}}{2}}_{0}e^{-\phi z^2 } \frac{4z}{D_{\rm L}^2}dz  +
  \int_{\frac{D_{\rm L}}{2}}^{D_{\rm L}}e^{-\phi z^2 }  \frac{4(D_{\rm L} - z)}{D_{\rm L}^2} dz 
  \\\nonumber
  =&  \frac{2\sqrt{\pi}}{D_{\rm L}\sqrt{\phi}} \left[\Phi(\sqrt{\phi} D_{\rm L})-\Phi \Big(\frac{\sqrt{\phi} D_{\rm L}}{2}\Big)\right]\\ \label{T1} &
+\frac{2}{\phi D_{\rm L}^2} \left(1-2e^{-\frac{\phi D_{\rm L}^2}{4}}+e^{-\phi D_{\rm L}^2}\right),
   \end{align}
 where $\Phi(\cdot)$ denotes the probability integral \cite{GRADSHTEYN}.  Similarly, the expectation of $e^{-\phi  y_2^2  }$ with respect to $y_2$ can be evaluated as follows: 
 \begin{align}\label{T2}
 \int^{\frac{D_{\rm W}}{2}}_{-\frac{D_{\rm W}}{2}}    e^{-\phi  y_2^2  } \frac{1}{D_{\rm W}}dy_2 = \frac{\sqrt{\pi}}{D_{\rm W}\sqrt{\phi}} \Phi\left( \frac{\sqrt{\phi}D_{\rm W}}{2}\right)  .
 \end{align}
 
Define  $ g_1(a) = \int^{\frac{D_{\rm W}}{2}}_{0}\log_2\left(
 y ^2+a 
\right) dy$, whose closed-form expression can be obtained as follows:\cite{mypa}
  \begin{align}\label{ga}
  g_1(a) =   & \tau_2 -\log_2(e)D_{\rm W}+2\log_2(e)\sqrt{a} \tan^{-1}\left(\frac{D_{\rm W}}{2\sqrt{a}}\right),
  \end{align}
  where $\tau_2=\frac{D_{\rm W}}{2}\log_2\left(
 \frac{D_{\rm W}^2}{4}+a
\right) $, and $\tan(\cdot)^{-1}$ denotes the inverse tangent function.

By combining \eqref{T1}, \eqref{T2}, and \eqref{ga}, the following lemma can be obtained.
 \begin{lemma}\label{lemma1}
The lower bound on the uplink ergodic sum-rate gain of EDMA over TDMA is given by 
  \begin{align}
\bar{\Delta}_{\rm sum}^{\rm LB}
=&  \frac{2}{D_{\rm W}}
 \left(1-2e^{-\phi d^2  } T_1 \frac{\sqrt{\pi}}{D_{\rm W}\sqrt{\phi}} \Phi\left( \frac{\sqrt{\phi}D_{\rm W}}{2}\right)   \right)
   \\\nonumber &\times \left[g_1\left( d^2+\rho\eta\right)-g_1\left( d^2\right)\right].
 \end{align}
\end{lemma}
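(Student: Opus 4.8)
The plan is to assemble the lemma from pieces that are already in hand: the closed form for $T_1$ in \eqref{T1}, the $y_2$-expectation $\int^{D_{\rm W}/2}_{-D_{\rm W}/2} e^{-\phi y_2^2}\,\frac{dy_2}{D_{\rm W}}$ in \eqref{T2}, and the closed form $g_1(a)$ in \eqref{ga}. Starting from the displayed expression for $\bar{\Delta}_{\rm sum}^{\rm LB}$ derived immediately before the lemma, the only integral not yet reduced to a named quantity is the $y_1$-integral of the per-user TDMA rate term, $\int^{D_{\rm W}/2}_{-D_{\rm W}/2}\log_2\!\big(1+\rho\eta/(y_1^2+d^2)\big)\frac{dy_1}{D_{\rm W}}$.

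To evaluate that integral I would write $\log_2\!\big(1+\frac{\rho\eta}{y_1^2+d^2}\big) = \log_2(y_1^2+d^2+\rho\eta) - \log_2(y_1^2+d^2)$, turning the integrand into a difference of terms of the form $\log_2(y_1^2+a)$. Each such term is an even function of $y_1$, so integrating over $[-D_{\rm W}/2,\,D_{\rm W}/2]$ gives twice the integral over $[0,\,D_{\rm W}/2]$, which is exactly $g_1(d^2+\rho\eta) - g_1(d^2)$ by the definition of $g_1$. Hence the $y_1$-integral equals $\frac{2}{D_{\rm W}}\big[g_1(d^2+\rho\eta) - g_1(d^2)\big]$, and the explicit form then follows from \eqref{ga}.

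Substituting this together with \eqref{T2} into the expression for $\bar{\Delta}_{\rm sum}^{\rm LB}$ gives
$$\bar{\Delta}_{\rm sum}^{\rm LB} = 2\left(\frac{1}{2} - e^{-\phi d^2}\,T_1\,\frac{\sqrt{\pi}}{D_{\rm W}\sqrt{\phi}}\,\Phi\!\Big(\frac{\sqrt{\phi}D_{\rm W}}{2}\Big)\right)\cdot\frac{2}{D_{\rm W}}\big[g_1(d^2+\rho\eta) - g_1(d^2)\big],$$
and absorbing the outer factor $2$ into the parenthesis via $2(\frac{1}{2}-X) = 1-2X$ while collecting the $\frac{2}{D_{\rm W}}$ prefactor yields precisely the claimed identity, with $T_1$ given by \eqref{T1}. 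There is no real obstacle here — every nontrivial integral has already been carried out in the text — so the proof is essentially assembly; the one point requiring care is the bookkeeping of the $\frac{1}{D_{\rm W}}$ normalization factors and the use of evenness to reduce the symmetric $y_1$-interval to $[0,\,D_{\rm W}/2]$ so that the definition of $g_1$ applies verbatim and the final prefactor $\frac{2}{D_{\rm W}}$ comes out correctly.
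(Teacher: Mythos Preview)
Your proposal is correct and follows exactly the paper's approach: the paper states that Lemma~\ref{lemma1} is obtained ``by combining \eqref{T1}, \eqref{T2}, and \eqref{ga},'' and you have simply spelled out that combination, including the one missing step of rewriting the $y_1$-integral as $\frac{2}{D_{\rm W}}\big[g_1(d^2+\rho\eta)-g_1(d^2)\big]$ via the log-difference and evenness. There is nothing to add.
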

 Lemma \ref{lemma1} facilitates the asymptotic result provided in the following lemma. 
  \begin{lemma}\label{lemma2}
For the two-user uplink case with ${\phi} D_{\rm W}^2\rightarrow \infty$ and $ {\phi} D_{\rm L}^2\rightarrow \infty$,  $\bar{\Delta}_{\rm sum}^{\rm LB} \geq 0$.  For the case with $ {\phi} D_{\rm W}^2\rightarrow 0$ and $ {\phi} D_{\rm L}^2\rightarrow 0$,  $\bar{\Delta}_{\rm sum}^{\rm LB}\leq 0$.
\end{lemma}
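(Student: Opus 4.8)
The plan is to start from the closed-form expression in Lemma~\ref{lemma1} and reduce the sign of $\bar{\Delta}_{\rm sum}^{\rm LB}$ to the sign of a single scalar. First I would note that $g_1(a)=\int_0^{D_{\rm W}/2}\log_2(y^2+a)\,dy$ is strictly increasing in $a>0$, since $\log_2(y^2+a')>\log_2(y^2+a)$ pointwise whenever $a'>a$; hence, as $\rho\eta>0$, we have $g_1(d^2+\rho\eta)-g_1(d^2)>0$ (this can also be read off from \eqref{ga}). Because $2/D_{\rm W}>0$ as well, Lemma~\ref{lemma1} gives $\operatorname{sign}\!\big(\bar{\Delta}_{\rm sum}^{\rm LB}\big)=\operatorname{sign}(\Theta)$ with
\[
\Theta \triangleq 1-2e^{-\phi d^2}\,T_1\,\frac{\sqrt{\pi}}{D_{\rm W}\sqrt{\phi}}\,\Phi\!\left(\tfrac{\sqrt{\phi}D_{\rm W}}{2}\right),
\]
so it suffices to determine $\lim\Theta$ in each of the two regimes.

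For $\phi D_{\rm W}^2\to\infty$ and $\phi D_{\rm L}^2\to\infty$ I would bound the three $\phi$-dependent factors crudely, but only in terms of the products $\phi D_{\rm W}^2$ and $\phi D_{\rm L}^2$. Since $\Phi(\cdot)\le1$, one has $\frac{\sqrt{\pi}}{D_{\rm W}\sqrt{\phi}}\Phi(\sqrt{\phi}D_{\rm W}/2)\le\sqrt{\pi}/\sqrt{\phi D_{\rm W}^2}$; comparing the two branches of \eqref{z1z2} shows $f_{x_2-x_1}(z)\le4z/D_{\rm L}^2$ on $[0,D_{\rm L}]$, so $T_1\le\int_0^{D_{\rm L}}e^{-\phi z^2}\frac{4z}{D_{\rm L}^2}dz=\frac{2(1-e^{-\phi D_{\rm L}^2})}{\phi D_{\rm L}^2}\le\frac{2}{\phi D_{\rm L}^2}$; and $e^{-\phi d^2}\le1$. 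Combining these, $\Theta\ge 1-\frac{4\sqrt{\pi}}{(\phi D_{\rm L}^2)\sqrt{\phi D_{\rm W}^2}}\to1>0$, so $\bar{\Delta}_{\rm sum}^{\rm LB}>0$ once $\phi D_{\rm W}^2$ and $\phi D_{\rm L}^2$ are large enough, which is the first claim.

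For $\phi D_{\rm W}^2\to0$ and $\phi D_{\rm L}^2\to0$ (equivalently $\phi\to0$, since $D_{\rm W},D_{\rm L},d$ are fixed system parameters) I would instead take exact limits of the three factors: $e^{-\phi d^2}\to1$; the sandwich $e^{-\phi D_{\rm L}^2}\le T_1=\int_0^{D_{\rm L}}e^{-\phi z^2}f_{x_2-x_1}(z)\,dz\le1$ (upper bound because $f_{x_2-x_1}$ integrates to one, lower because $e^{-\phi z^2}\ge e^{-\phi D_{\rm L}^2}$ on $[0,D_{\rm L}]$) gives $T_1\to1$; and $\frac{\sqrt{\pi}}{D_{\rm W}\sqrt{\phi}}\Phi(\sqrt{\phi}D_{\rm W}/2)=\sqrt{\pi}\,\frac{\Phi(x)}{2x}\big|_{x=\sqrt{\phi}D_{\rm W}/2}\to\sqrt{\pi}\cdot\frac{\Phi'(0)}{2}=1$ because $\Phi'(0)=2/\sqrt{\pi}$. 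Hence $\Theta\to1-2<0$, so $\bar{\Delta}_{\rm sum}^{\rm LB}<0$ once $\phi D_{\rm W}^2$ and $\phi D_{\rm L}^2$ are small enough, which is the second claim.

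The work here is essentially routine once Lemma~\ref{lemma1} is available; the only points that need care are (i) recording the strict monotonicity of $g_1$, which is what guarantees that the bracketed factor never vanishes or changes sign and hence that $\operatorname{sign}(\bar{\Delta}_{\rm sum}^{\rm LB})=\operatorname{sign}(\Theta)$, and (ii) keeping the limit genuinely joint in $\phi,D_{\rm W},D_{\rm L}$ by bounding $T_1$ and the probability-integral term by quantities depending only on $\phi D_{\rm L}^2$ and $\phi D_{\rm W}^2$ in the $\to\infty$ case, and by invoking the small-argument behaviour $\Phi(x)\sim 2x/\sqrt{\pi}$ together with the fixedness of the geometric parameters in the $\to0$ case. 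No sharper asymptotics of $T_1$ or of $\Phi$ are needed.
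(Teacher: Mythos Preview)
Your approach is the same as the paper's: both reduce the sign of $\bar{\Delta}_{\rm sum}^{\rm LB}$ to the sign of the scalar $\Theta$ (the paper calls it $T_2$) by using that $g_1$ is increasing, and then analyze $\Theta$ in the two asymptotic regimes. The execution differs slightly. In the large-$\phi D^2$ regime the paper substitutes the asymptotic expansion $\Phi(x)\approx 1-\tfrac{e^{-x^2}}{\sqrt{\pi}x}$ into the closed form of $T_1$ and simplifies; you instead bound $\Phi\le1$ and $T_1\le\int_0^{D_{\rm L}}e^{-\phi z^2}\tfrac{4z}{D_{\rm L}^2}dz\le\tfrac{2}{\phi D_{\rm L}^2}$, which is cleaner and sidesteps any manipulation of the error-function expansion. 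In the small-$\phi D^2$ regime the paper uses the series $\Phi(x)\approx\tfrac{2}{\sqrt{\pi}}x$ and $e^{-x}\approx 1-x$ term by term inside the explicit expression for $T_1$, whereas your sandwich $e^{-\phi D_{\rm L}^2}\le T_1\le 1$ gets $T_1\to1$ in one line. Your interpretation ``equivalently $\phi\to0$, since $D_{\rm W},D_{\rm L},d$ are fixed'' is the same reading of the limit implicitly used in the paper; without it the factor $e^{-\phi d^2}$ would not tend to $1$ and an extra mild condition on $\phi d^2$ would be needed. Overall your argument is correct and slightly more elementary, at the cost of not producing the explicit limiting constants the paper records.
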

 \begin{proof}
 See Appendix \ref{prooflemma2}.
 \end{proof}

{\it Remark 3:}   Lemma \ref{lemma2} illustrates the ideal application scenario for  EDMA. In particular, if the service region is large and the LoS blockage is severe, EDMA is guaranteed to outperform TDMA.   In Section \ref{section simulations}, simulation results are presented to show that, even for small or moderate values of $\phi$ and the service area size, EDMA can still outperform TDMA, particularly if pinching antenna location optimization is carried out.  

{\it Remark 4:} We note that a positive ergodic sum-rate gain does not guarantee that the instantaneous sum-rate gain is always positive. The latter can be characterized by the probability that the instantaneous sum rate of EDMA is larger than that of TDMA, and is to be studied in the next subsection.    
  \vspace{-1em}
 \subsection{Special Case with $M=2$ and $y_m=0$}
For the special case with $M=2$ and $y_m=0$, concise analytical results can be obtained, as shown in the following. We note that this special case is important in practice, since users in tunnel and railway communication scenarios are expected to be underneath the waveguide, i.e., $y_m=0$.  In particular, by applying Lemma \ref{lemma1}, a simplified expression for $\bar{\Delta}_{\rm sum}^{\rm LB}$ is obtained in the following corollary.
 \begin{corollary}
 For the considered special case, the lower bound on the ergodic downlink/uplink sum-rate performance gain of EDMA over TDMA can be expressed as follows:
   \begin{align} \label{crollay1eq}
\bar{\Delta}_{\rm sum}^{\rm LB}
= &    \left(1-4  e^{-\phi d^2 }   \frac{\sqrt{\pi}}{D_{\rm L}\sqrt{\phi}} \left[\Phi(\sqrt{\phi} D_{\rm L})-\Phi \Big(\frac{\sqrt{\phi} D_{\rm L}}{2}\Big)\right] \right. \\\nonumber &\left. 
+\frac{4  e^{-\phi d^2 } }{\phi D_{\rm L}^2} \left(1-2e^{-\frac{\phi D_{\rm L}^2}{4}}+e^{-\phi D_{\rm L}^2}\right) \right) 
   \log_2\left(
1+ \rho \frac{\eta}{ d^2}
\right)  .
 \end{align}
 
 \end{corollary}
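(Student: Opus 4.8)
\emph{Proof proposal.} The plan is to read off \eqref{crollay1eq} from Lemma \ref{lemma1} by specializing to $y_m=0$, which is the same as letting $D_{\rm W}\to 0$ in the formula of Lemma \ref{lemma1}. That formula is a product of three scalar factors: $\tfrac{2}{D_{\rm W}}$, the bracket $1-2e^{-\phi d^2}T_1\,\tfrac{\sqrt{\pi}}{D_{\rm W}\sqrt{\phi}}\Phi\!\big(\tfrac{\sqrt{\phi}D_{\rm W}}{2}\big)$, and $g_1(d^2+\rho\eta)-g_1(d^2)$. The quantity $T_1$ involves only $D_{\rm L}$ (not $D_{\rm W}$) and so is untouched by the specialization, so the whole task reduces to evaluating two $D_{\rm W}\to 0$ limits.

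First I would observe that $\tfrac{2}{D_{\rm W}}\big[g_1(d^2+\rho\eta)-g_1(d^2)\big]$ is precisely the average of $\log_2\!\big(1+\rho\eta/(y^2+d^2)\big)$ over $y\in[-D_{\rm W}/2,\,D_{\rm W}/2]$, so it converges to $\log_2\!\big(1+\rho\eta/d^2\big)$ as $D_{\rm W}\to 0$; the same limit also drops out by inserting the closed form \eqref{ga} and applying L'Hôpital's rule (or the mean-value theorem) to the resulting $0/0$ expression. Likewise, by \eqref{T2}, $\tfrac{\sqrt{\pi}}{D_{\rm W}\sqrt{\phi}}\Phi\!\big(\tfrac{\sqrt{\phi}D_{\rm W}}{2}\big)$ is the average of $e^{-\phi y^2}$ over the same interval, hence tends to $1$. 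Combining these, $\bar{\Delta}_{\rm sum}^{\rm LB}\to\big(1-2e^{-\phi d^2}T_1\big)\log_2\!\big(1+\rho\eta/d^2\big)$, and substituting the closed form of $T_1$ from \eqref{T1} and distributing the factor $2e^{-\phi d^2}$ across its two pieces produces \eqref{crollay1eq}. The ``downlink/uplink'' wording of the statement is then immediate from Corollary \ref{corollary1}, whose hypothesis that the $y_m$ are i.i.d.\ holds trivially when every $y_m=0$ (equivalently, one checks directly from \eqref{eq5xx}--\eqref{eq5xxd} that the two rates agree term by term at $y_m=0$).

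I do not expect a real obstacle here: this is a routine specialization of an already-derived expression, and the only places that need a line of justification are the two shrinking-interval limits above and the sign bookkeeping when $T_1$ is expanded. As an independent sanity check that sidesteps the limiting argument entirely, one can set $y_1=y_2=0$ directly in the two-user lower bound for $\Delta_{\rm sum}^{\rm LB}$ stated at the top of this subsection: both logarithms collapse to the constant $\log_2\!\big(1+\rho\eta/d^2\big)$ and both prefactors become $\tfrac12-e^{-\phi((x_2-x_1)^2+d^2)}$, giving $\Delta_{\rm sum}^{\rm LB}\big|_{y_m=0}=\big(1-2e^{-\phi d^2}e^{-\phi(x_2-x_1)^2}\big)\log_2\!\big(1+\rho\eta/d^2\big)$; since the $x$-coordinates remain uniform over ${\rm A}_1$ and ${\rm A}_2$, the density of $z=x_2-x_1$ is still \eqref{z1z2}, so taking the expectation over $z$ replaces $e^{-\phi(x_2-x_1)^2}$ by $T_1$ by its very definition, recovering the same formula.
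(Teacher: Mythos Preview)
Your approach is exactly what the paper does: it offers no explicit proof of this corollary but simply states that it follows ``by applying Lemma~\ref{lemma1}'' to the special case $y_m=0$, which is precisely the $D_{\rm W}\to 0$ specialization you outline (with the same two shrinking-interval limits implicit). Your direct sanity-check computation---setting $y_1=y_2=0$ in the two-user $\Delta_{\rm sum}^{\rm LB}$ and averaging $e^{-\phi(x_2-x_1)^2}$ over the pdf \eqref{z1z2} to get $T_1$---is an equally valid route and in fact makes the structure $(1-2e^{-\phi d^2}T_1)\log_2(1+\rho\eta/d^2)$ more transparent; note that when you expand $T_1$ you will obtain a minus sign in front of the $\tfrac{4e^{-\phi d^2}}{\phi D_{\rm L}^2}(\cdots)$ term, so the ``$+$'' in the displayed statement appears to be a typographical slip in the paper.
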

 
Following the steps in the proof for Lemma \ref{lemma2}, it is straightforward to show that  for the case with $\phi D_{\rm L}^2\rightarrow \infty$, $\bar{\Delta}_{\rm sum}^{\rm LB}$ shown in \eqref{crollay1eq} can be approximated as follows:
   \begin{align}    \label{dfe1}
\bar{\Delta}_{\rm sum}^{\rm LB}
\approx &   \left(1+\frac{4  e^{-\phi d^2 }}{\phi D_{\rm L}^2} \right)
   \log_2\left(
1+ \rho \frac{\eta}{ d^2}
\right) \geq 0 ,
 \end{align}
 which means that EDMA outperforms TDMA.  
On the other hand, assuming that $\phi D_{\rm L}^2\rightarrow 0$, $\bar{\Delta}_{\rm sum}^{\rm LB}$ shown in \eqref{crollay1eq} can be approximated as follows:
    \begin{align} \label{dfe2}
\bar{\Delta}_{\rm sum}^{\rm LB}
\approx &       -6  e^{-\phi d^2 }    
   \log_2\left(
1+ \rho \frac{\eta}{ d^2}
\right)\leq 0,
 \end{align}
 which means that TDMA outperforms EDMA.  Both \eqref{dfe1} and \eqref{dfe2} are consistent with the conclusions drawn in Lemma \ref{lemma2}.

In addition to the ergodic sum-rate gain, it is also of interest to study the probability that EDMA yields a larger instantaneous sum rate than TDMA, which is analyzed in the following. For the considered special case, the instantaneous sum-rate difference between the two considered schemes can be lower bounded as follows: 
  \begin{align}
\Delta_{\rm sum}  \geq  &     \left(1-2e^{-\phi\left(\left( x_2-x_1\right)^2 +d^2\right) } \right)
   \log_2\left(
1+ \rho \frac{\eta}{  d^2}
\right) .
 \end{align}

Therefore, the probability that EDMA outperforms TDMA can be expressed as follows: 
  \begin{align}
\mathbb{P} (\Delta_{\rm sum} \geq 0)\geq & \mathbb{P} \left(1-2e^{-\phi\left(\left( x_2-x_1\right)^2 +d^2\right) }\geq 0\right)\\\nonumber = & \mathbb{P} \left(    x_2-x_1   \geq \underset{\nu}{\underbrace{\sqrt{\frac{\log(2)}{\phi}-d^2}}}\right),
 \end{align}
 since $x_2\geq x_1$, 
 where it is assumed that $\frac{\log(2)}{\phi}-d^2\geq 0$. Otherwise, EDMA always outperforms TDMA. Furthermore, assume that $\nu\leq D_{\rm L}$, since the user separation cannot be larger than the width of the overall service area. 
By applying the pdf of $ x_2-x_1$ shown in \eqref{z1z2} and with some straightforward algebraic manipulations, the following lemma is obtained.

\begin{lemma}\label{lemma3}
 For the considered special case, the probability for EDMA to realize a larger instantaneous sum rate than TDMA can be lower bounded as follows:
 \begin{align}\label{lemma3eq}
 \mathbb{P} (\Delta_{\rm sum} \geq 0)\geq \left\{\begin{array}{cc} 
1- \frac{2\nu^2}{D_{\rm L}^2}, &    \nu\leq \frac{D_{\rm L}}{2}\\ 2  \left(1 -\frac{\nu}{D_{\rm L}}\right)^2 , &   \frac{D_{\rm L}}{2}< \nu\leq D_{\rm L}\end{array}\right..
 \end{align}
 
\end{lemma}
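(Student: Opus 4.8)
The plan is to reduce the statement to a single elementary integration against the density in \eqref{z1z2}. From the displayed chain of inequalities immediately preceding the lemma, $\mathbb{P}(\Delta_{\rm sum}\ge 0)$ is already lower bounded by $\mathbb{P}(x_2-x_1\ge\nu)$, so it suffices to evaluate the tail probability $\int_\nu^{D_{\rm L}} f_{x_2-x_1}(z)\,dz$ and to verify that it equals the two-branch expression in \eqref{lemma3eq}. Since $f_{x_2-x_1}$ is piecewise linear with a kink at $z=D_{\rm L}/2$, the natural case split is exactly $\nu\le D_{\rm L}/2$ versus $D_{\rm L}/2<\nu\le D_{\rm L}$, which is precisely why the bound has two pieces; recall that the text already assumes $\nu$ is real (i.e.\ $\frac{\log 2}{\phi}-d^2\ge 0$) and $\nu\le D_{\rm L}$, so the range of integration is always nonempty and contained in $[0,D_{\rm L}]$.

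For $\nu\le D_{\rm L}/2$, I would write $\mathbb{P}(x_2-x_1\ge\nu)=\int_\nu^{D_{\rm L}/2}\frac{4z}{D_{\rm L}^2}\,dz+\int_{D_{\rm L}/2}^{D_{\rm L}}\frac{4(D_{\rm L}-z)}{D_{\rm L}^2}\,dz$. The first integral is $\frac12-\frac{2\nu^2}{D_{\rm L}^2}$, and the second, after the substitution $u=D_{\rm L}-z$, is $\int_0^{D_{\rm L}/2}\frac{4u}{D_{\rm L}^2}\,du=\frac12$; adding the two gives $1-\frac{2\nu^2}{D_{\rm L}^2}$. For $D_{\rm L}/2<\nu\le D_{\rm L}$, the lower limit $\nu$ lies in the second linear piece, so only that piece contributes: $\mathbb{P}(x_2-x_1\ge\nu)=\int_\nu^{D_{\rm L}}\frac{4(D_{\rm L}-z)}{D_{\rm L}^2}\,dz$, and the same substitution $u=D_{\rm L}-z$ turns this into $\int_0^{D_{\rm L}-\nu}\frac{4u}{D_{\rm L}^2}\,du=\frac{2(D_{\rm L}-\nu)^2}{D_{\rm L}^2}=2\bigl(1-\frac{\nu}{D_{\rm L}}\bigr)^2$. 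Collecting the two cases gives \eqref{lemma3eq}.

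Because the whole argument is just integrating a known, explicitly given density, there is no genuine obstacle here; the only thing requiring care is the bookkeeping of which linear segment of $f_{x_2-x_1}$ contains the lower limit $\nu$, together with keeping the two mild assumptions ($\nu$ real and $\nu\le D_{\rm L}$) in force. As a quick consistency check of the final expression, setting $\nu=0$ recovers probability $1$, setting $\nu=D_{\rm L}$ recovers probability $0$, and the two branches agree at $\nu=D_{\rm L}/2$, where both equal $\frac12$.
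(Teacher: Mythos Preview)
Your proposal is correct and follows exactly the route the paper indicates: the paper simply says the result follows ``by applying the pdf of $x_2-x_1$ shown in \eqref{z1z2} and with some straightforward algebraic manipulations,'' and your argument supplies precisely those manipulations. The case split at $\nu=D_{\rm L}/2$, the elementary integrals, and the sanity checks are all fine.
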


{\it Remark 5:}  Lemma \ref{lemma3} shows that the lower bound on $ \mathbb{P} (\Delta_{\rm sum}\geq 0)$ is always a monotonically increasing function of $D_{\rm L}$, for both the cases shown in \eqref{lemma3eq}. In other words, the performance gain of EDMA over TDMA can be improved by increasing $D_{\rm L}$. This conclusion is expected since increasing $D_{\rm L}$ increases user separation and hence improves the performance gain of EDMA over TDMA.    
 
 \section{Pinching Antenna Location Optimization }\label{section optimization}
In the previous section, a simple configuration of pinching antenna locations was used to facilitate performance analysis. We note that the key feature of pinching antennas is their flexibility to be activated at any desired position along the waveguide. Therefore, it is important to study how the flexibility of pinching antennas can be used to further improve the performance of EDMA over TDMA.

In particular, the following pinching antenna location optimization problem is considered:
     \begin{problem}\label{pb:1} 
  \begin{alignat}{2}
\underset{x_m^{\rm Pin} }{\rm{max}}  &\quad   \min\left\{ {R}_1^{\rm i}, \cdots,   {R}_M^{\rm i}\right\} 
\\ s.t. &\quad  \label{1tst:1}  x_m^s\leq x_m^{\rm Pin}  \leq x_{m}^e , \quad 1\leq m \leq M,
  \end{alignat}
\end{problem}  
where ${\rm i}\in\{ {\rm UL}, {\rm DL} \}$, $\min\left\{ {R}_1^{\rm i}, \cdots,   {R}_M^{\rm i}\right\} $ is used as the objective function in order to guaranttee user fairness, and the constraint in \eqref{1tst:1} ensures that ${\rm U}_m$'s pinching antenna is activated within the $m$-th sgement. Pinching-antenna position optimization for uplink and downlink is studied in the following subsections separately, as their corresponding data rate expressions are different. 
%

\subsection{Uplink Pinching Antenna Location Optimization}
An important observation from the uplink data rate expression in \eqref{eq5xx} is that the choice of one pinching antenna has no impact on the other users' data rates. This observation can be used to decompose problem \ref{pb:1} into $M$ decoupled subproblems as follows:
     \begin{problem}\label{pb:2} 
  \begin{alignat}{2}
\underset{x_m^{\rm Pin} }{\rm{max}}  &\quad    {R}_m^{\rm UL} \label{1tst:2}
\\ s.t. &\quad    x_m^s\leq x_m^{\rm Pin}  \leq x_{m}^e.
  \end{alignat}
\end{problem}  
In other words, for the uplink case, the locations of the $M$ pinching antennas can be optimized in a parallel and decoupled manner.

\subsubsection{Properties of $ {R}_m^{\rm UL}$} Problem \ref{pb:2} is still challenging to solve, primarily due to the complex expression of $ {R}_m^{\rm UL}$. Similar to \eqref{approx},  problem \ref{pb:2} can be approximated as follows:
       \begin{problem}\label{pb:3} 
  \begin{alignat}{2}\nonumber
\underset{x_m^{\rm Pin} }{\rm{max}}  &\quad  f\left( x_m^{\rm Pin}\right)  \triangleq \left(1- e^{-\phi\left((x_{m+1}-x_m^{\rm Pin} )^2+y_{m+1}^2+d^2\right) }  \right) \\\nonumber &\quad\quad \times \left(1-e^{-\phi\left( (x_m^{\rm Pin}-x_{m-1})^2+y_{m-1}^2+d^2 \right)}\right)\\  &\quad\quad\times  \log_2\left(
1+ \rho \frac{\eta}{ (x_m^{\rm Pin}-x_{m})^2+ y_m^2 +d^2}
\right)
\\ s.t. &\quad   x_m^s\leq x_m^{\rm Pin}  \leq x_{m}^e .
  \end{alignat}
\end{problem}  

Although $f\left( x_m^{\rm Pin}\right)$ is much simpler than ${R}_m^{\rm UL}$, its expression is still very complex, which motivates the study of the special case with $y_m=0$, which reveals the unimodal property of $f\left( x_m^{\rm Pin}\right) $ in the range of $x_{m-1}\leq x_m^{\rm Pin}\leq x_{m+1}$.

\begin{lemma}\label{lemma4}
At high SNR, for the considered special case with $y_m=0$, a sufficient condition for $f\left( x_m^{\rm Pin}\right) $ to be a unimodal function of $x_m^{\rm Pin}$, for $x_{m-1}\leq x_m^{\rm Pin}\leq x_{m+1}$, is $\frac{\delta^2\phi}{2}  \geq 1$, where $\delta=\frac{x_{m+1}-x_{m-1}}{2}$.
\end{lemma}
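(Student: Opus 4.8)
The plan is to reduce unimodality of $f$ to a single sign change of its logarithmic derivative, and to split the interval $[x_{m-1},x_{m+1}]$ into a central zone where $f$ is log-concave and two short end-zones where $f$ is monotone. First I would set $y_{m-1}=y_m=y_{m+1}=0$ and, at high SNR, replace $\log_2\big(1+\tfrac{\rho\eta}{r}\big)$ by $\log_2(\rho\eta)-\log_2 r$, so that $f(x)=\beta(x)C(x)$, where $\beta(x)=\big(1-e^{-\phi((x_{m+1}-x)^2+d^2)}\big)\big(1-e^{-\phi((x-x_{m-1})^2+d^2)}\big)$ collects the two blockage factors and $C(x)=\log_2\big(1+\tfrac{\rho\eta}{(x-x_m)^2+d^2}\big)$ is the (large) signal term. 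Introducing the arm lengths $\ell_-=x-x_{m-1}$, $\ell_+=x_{m+1}-x$ (so $\ell_-+\ell_+=2\delta$) and $\psi(\ell)=\ln\big(1-e^{-\phi(\ell^2+d^2)}\big)$, one gets $(\ln\beta)'=\psi'(\ell_-)-\psi'(\ell_+)$ and $(\ln\beta)''=\psi''(\ell_-)+\psi''(\ell_+)$ with $\psi'(\ell)=\tfrac{2\phi\ell}{e^{\phi(\ell^2+d^2)}-1}$ and, after a short computation, $\psi''(\ell)=\tfrac{2\phi}{e^{\phi(\ell^2+d^2)}-1}\big(1-\tfrac{2\phi\ell^2 e^{\phi(\ell^2+d^2)}}{e^{\phi(\ell^2+d^2)}-1}\big)$; in particular $\psi''(\ell)\le 0$ as soon as $2\phi\ell^2\ge 1$, i.e.\ $\ell\ge \tfrac1{\sqrt{2\phi}}$, while $\psi'(\ell)$ and $\psi''(\ell)$ decay exponentially once $\phi\ell^2$ is large. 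Since $f>0$ on the open interval, the sign of $f'(x)$ equals that of $(\ln\beta)'(x)+(\ln C)'(x)$, so it suffices to show this sum passes from $+$ to $-$ exactly once.

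Next I would localize, exploiting that the hypothesis $\tfrac{\delta^2\phi}{2}\ge1$ is exactly the statement $\delta-\tfrac1{\sqrt{2\phi}}\ge \tfrac1{\sqrt{2\phi}}$. Hence the central interval $\mathcal I_0=\{x:\ \ell_-(x)\ge \tfrac1{\sqrt{2\phi}}\ \text{and}\ \ell_+(x)\ge \tfrac1{\sqrt{2\phi}}\}$ is nonempty, symmetric about $c=\tfrac{x_{m-1}+x_{m+1}}2$ with half-width $\delta-\tfrac1{\sqrt{2\phi}}\ge \tfrac1{\sqrt{2\phi}}$, while the two residual end-intervals $\mathcal I_\pm$ (near $x_{m\pm1}$) each have width $\tfrac1{\sqrt{2\phi}}$. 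On $\mathcal I_0$ both arms exceed $\tfrac1{\sqrt{2\phi}}$, so $\psi''(\ell_\pm)<0$ and $(\ln\beta)''<0$; at high SNR $C$ is large and slowly varying, so $(\ln C)''$ is $O(1/\log\rho\eta)$ with a bounded positive part, negligible against the $\Theta(\phi)$-size negative $(\ln\beta)''$. Therefore $(\ln f)''<0$ on $\mathcal I_0$, i.e.\ $f$ is log-concave and hence unimodal there. On an end-interval, say $\mathcal I_+=[x_{m+1}-\tfrac1{\sqrt{2\phi}},x_{m+1}]$, the near arm $\ell_+$ is short — so $\beta$ can be locally log-convex and the central argument breaks — but the hypothesis forces the far arm to satisfy $\ell_-\ge \tfrac3{\sqrt{2\phi}}$, so $0\le \psi'(\ell_-)\le \tfrac{2\phi\ell_-}{e^{\phi\ell_-^2}-1}$ is exponentially small (as $\phi\ell_-^2\ge \tfrac92$), whereas $\psi'(\ell_+)\ge 0$; thus $(\ln\beta)'(x)=\psi'(\ell_-)-\psi'(\ell_+)$ is at most an exponentially small positive quantity, dominated at high SNR by the genuinely negative signal slope $(\ln C)'(x)<0$ (here $x>x_m$ and $C$ strictly decays away from $x_m$), so $f'<0$ throughout $\mathcal I_+$; symmetrically $f'>0$ throughout $\mathcal I_-$. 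Gluing the three pieces, $(\ln f)'$ is positive on $\mathcal I_-$, strictly decreasing on $\mathcal I_0$ (from a positive value at its left endpoint to a negative value at its right endpoint, by continuity with the end-intervals), and negative on $\mathcal I_+$; hence it has exactly one zero and $f$ is unimodal on $[x_{m-1},x_{m+1}]$.

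I expect the main obstacle to be the end-intervals: there $\beta$ is genuinely (log-)convex, so unimodality cannot be obtained from any concavity bound and must instead be recovered from the dominance of the signal term, which is precisely where both ``high SNR'' and the exact constant in $\tfrac{\delta^2\phi}{2}\ge1$ are used. The condition is calibrated so that (i) the central log-concave zone is at least as wide as the two end-zones combined, and (ii) the longest competing positive slope $\psi'(\ell_-)$ available in any end-zone is suppressed by the exponential factor $e^{-\phi\ell_-^2}\le e^{-9/2}$ and so cannot overturn the decreasing trend imposed by the large signal term. A secondary delicacy is making the gluing fully rigorous — checking that $(\ln f)'$ keeps its sign on the closures of $\mathcal I_\pm$ so that its unique zero lies in the interior of $\mathcal I_0$ — together with the edge case in which $\mathrm U_m$ sits so close to a region boundary that $x_m$ falls inside one of the end-intervals, which needs a short separate argument but leaves the conclusion intact.
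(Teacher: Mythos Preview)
Your central observation --- that $\psi(\ell)=\ln\bigl(1-e^{-\phi(\ell^2+d^2)}\bigr)$ is concave once $2\phi\ell^2\ge 1$ --- is precisely the engine of the paper's proof, but the paper deploys it more directly and avoids your three-zone split altogether. The paper first argues that at high SNR the contribution of the signal term to $(\ln f)'$ is $O(1/\log\rho)$, so unimodality of $f$ reduces to that of $\gamma_1=\beta$ alone; then, using the even symmetry of $\gamma_1$ in $w=x-\tfrac{x_{m-1}+x_{m+1}}{2}$, it shows $(\ln\gamma_1)'(w)=\psi'(\delta+w)-\psi'(\delta-w)\le 0$ for $w\in(0,\delta/2]$ by proving that $\gamma_3(y)=\tfrac{y}{e^{\phi(y^2+d^2)}-1}=\psi'(y)/(2\phi)$ is decreasing on $[\delta/2,3\delta/2]$. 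That monotonicity reduces to $\tfrac{2\phi y^2}{1-e^{-\phi(y^2+d^2)}}\ge 1$, which holds once $2\phi y^2\ge 1$, i.e.\ for all $y\ge\delta/2$ under the hypothesis $\tfrac{\delta^2\phi}{2}\ge1$. So the paper works at the level of the first derivative with a single monotonicity statement, not second-derivative log-concavity on a central zone plus separate end-zone repairs.

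Your end-zone step has a genuine gap. You claim that on $\mathcal I_+$ the residual positive slope $(\ln\beta)'\le\psi'(\ell_-)$ is ``dominated at high SNR by the genuinely negative signal slope $(\ln C)'(x)<0$.'' But $(\ln C)'(x)=O\bigl(1/\log(\rho\eta)\bigr)\to0$ as $\rho\to\infty$, whereas $\psi'(\ell_-)$ is a fixed positive number independent of $\rho$; for $\rho$ large enough the comparison goes the wrong way and you cannot conclude $f'<0$ on $\mathcal I_+$. Concretely, at $x=x_{m+1}$ one has $(\ln\beta)'=\psi'(2\delta)-\psi'(0)=\psi'(2\delta)>0$, and for sufficiently high SNR this exceeds $|(\ln C)'|$, giving $(\ln f)'(x_{m+1})>0$. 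The high-SNR regime thus works \emph{against} letting the signal term do any work on the ends, which is exactly why the paper discards $C$ entirely rather than relying on it. If you want to align with the paper, drop the end-zones and instead use the fact that $\psi'$ is decreasing to compare $\psi'(\delta+w)$ with $\psi'(\delta-w)$ directly on the symmetric sub-interval $|w|\le\delta/2$, where both arguments lie in the concave region; this is what the condition $\tfrac{\delta^2\phi}{2}\ge1$ is actually calibrated for.
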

\begin{proof}
See Appendix \ref{prooflemma4}.
\end{proof}

{\it Remark 6:} It can be straightforwardly verified that for moderately large $\phi$ and $D_{\rm L}$, the condition in Lemma \ref{lemma4} is satisfied. We note that for the case with small $\phi$ and $D_{\rm L}$, simulations have shown that $f\left( x_m^{\rm Pin}\right) $ might not be a unimodal function; however, the original function $ {R}_m^{\rm UL}$ is still most likely unimodal\footnote{We note that for the case that  ${R}_m^{\rm UL}$ is not a unimodal function, the use of the golden-section search algorithm yields a suboptimal solution only. However, as shown in Section \ref{section simulations}, for all the considered system parameters, the use of the golden-section search algorithm can achieve the same performance as an exhaustive search, i.e., the proposed algorithm realizes near-optimal performance with significantly reduced complexity.}. 
The global maximum of the considered unimodal function can be efficiently found with the golden section search algorithm presented in the next subsection. 

{\it Remark 7:} Assuming $\rho \rightarrow \infty$ and $y_m=0$, problem \ref{pb:3} can be approximated as follows:
       \begin{problem}\label{pb:3dx} 
  \begin{alignat}{2} 
\underset{x_m^{\rm Pin} }{\rm{max}}  &\quad  \left(1- e^{-\phi\left((x_{m+1}-x_m^{\rm Pin} )^2 +d^2\right) }  \right) \\\nonumber &\quad\quad \times \left(1-e^{-\phi\left( (x_m^{\rm Pin}-x_{m-1})^2 +d^2 \right)}\right)
\\ s.t. &\quad   x_m^s\leq x_m^{\rm Pin}  \leq x_{m}^e .
  \end{alignat}
\end{problem}  
Given the symmetric structure of the objective function of problem \ref{pb:3dx}, the optimal location for the pinching antenna is $x_m^{\rm Pin} =\frac{x_{m-1}+x_{m+1}}{2}$, if $x_m^s\leq \frac{x_{m-1}+x_{m+1}}{2} \leq x_{m}^e $. In Section \ref{section simulations}, simulation results will be provided to show that the solution of $x_m^{\rm Pin} =\frac{x_{m-1}+x_{m+1}}{2}$ is a surprisingly good choice for uplink EDMA, and can even outperform the solution of $x_m^{\rm Pin} =x_{m}$. 
\subsubsection{Golden section search algorithm}
The above unimodal discussion motivates the use of the golden section search algorithm outlined in Algorithm \ref{algorithm1}. In particular, as shown in Algorithm \ref{algorithm1}, the boundaries of the $m$-th segment are used as the initial boundaries of the search interval. During the $n$-th iteration, the boundaries of the current search interval, denoted by $x_{m}^{\rm UB}$ and $x_{m}^{\rm LB}$, are used to generate two new boundary point candidates, i.e., $x_{m,n}^{\rm LB}=x_m^{\rm UB}-\varphi \left(x_m^{\rm UB}-x_m^{\rm LB}\right)$ and 
$x_{m,n}^{\rm UB}=x_m^{\rm LB}+\varphi \left(x_m^{\rm UB}-x_m^{\rm LB}\right)$, where $n$ denotes the iteration index, and the golden ratio conjugate is given by $\varphi=\frac{\sqrt{5}-1}{2}$ \cite{7946258}. Based on the relationship between $R_m^{\rm UL}\left( x_{m,n}^{\rm UB}\right)$ and $R_m^{\rm UL}\left( x_{m,n}^{\rm LB}\right)$, the boundaries of the new search interval can be updated accordingly, as shown in Algorithm \ref{algorithm1}.

 \begin{algorithm}[t]
\caption{Golden Section Search Algorithm}

 \begin{algorithmic}[1]
 
\State Initial search interval: $x_m^{\rm LB}=x_m^s$, $x_m^{\rm UB}=x_m^e$
\State Set the tolerance threshold, $\epsilon$, define the golden ratio conjugate $\varphi$, and $n=1$
\If { $x_m^{\rm UB}-x_m^{\rm LB}>\epsilon$ } 
\State $n=n+1$
\State Generate two new boundary candidates:  

$x_{m,n}^{\rm LB}=x_m^{\rm UB}-\varphi \left(x_m^{\rm UB}-x_m^{\rm LB}\right)$, 

$x_{m,n}^{\rm UB}=x_m^{\rm LB}+\varphi \left(x_m^{\rm UB}-x_m^{\rm LB}\right)$
   \If { $R_m^{\rm UL}\left( x_{m,n}^{\rm UB}\right)>R_m^{\rm UL}\left( x_{m,n}^{\rm LB}\right)$ } 
   
   \State $x_m^{\rm UB}= x_{m,n}^{\rm UB}$

      \Else 
      \State $x_m^{\rm LB}= x_{m,n}^{\rm LB}$
     
  \EndIf
 \State \textbf{end}
 \EndIf
 
\State \textbf{end}
 \State The final antenna location is geneated as follows: $x_m^{\rm Pin}=\frac{x_m^{\rm UB}+x_m^{\rm LB}}{2}$.

 \end{algorithmic}\label{algorithm1}
\end{algorithm}

  \vspace{-1em}
\subsection{Downlink Pinching Antenna Location Optimization}
For the downlink case, problem \ref{pb:1} can be first equvalently recast as follows:
\begin{problem}\label{pb:4} 
  \begin{alignat}{2}
\underset{x_m^{\rm Pin} ,t }{\rm{max}}  &\quad   t 
\\ s.t. &\quad     R^{\rm DL}_m\geq t , \quad 1\leq m \leq M\\ &\quad x_m^s\leq x_m^{\rm Pin}  \leq x_{m}^e , \quad 1\leq m \leq M.
  \end{alignat}
\end{problem}  

Unlike the uplink case, the location of one downlink user's pinching antenna has a significant impact on the other users' data rates. As a result, the locations of all $M$ downlink users' pinching antennas need to be jointly optimized. 

Due to the complex expression of $R^{\rm DL}_m$, the approximation in \eqref{approx} is used, which means that problem \ref{pb:4} is approximated as follows: 
       \begin{problem}\label{pb:5} 
  \begin{alignat}{2}
\underset{x_m^{\rm Pin} , t}{\rm{max}}  &\quad   t \label{1tst:5}
\\ \nonumber s.t. &\quad    \left(1- e^{-\phi\left((x_{m-1}^{\rm Pin}-x_{m})^2+y_{m}^2+d^2\right) }  \right) \\&\times\nonumber  \left(1- e^{-\phi\left((x_{m+1}^{\rm Pin}-x_{m})^2+y_{m}^2+d^2\right) }  \right)\\  &\times  \log_2\left(
1+ \rho \frac{\eta}{(x_m^{\rm Pin}-x_m)^2+y_m^2 +d^2}
\right)\geq t,  \\ &\quad  x_m^s\leq x_m^{\rm Pin}  \leq x_{m}^e , \quad 1\leq m \leq M.
  \end{alignat}
\end{problem}  
 
By defining $u=\log(t\log 2)$, problem \ref{pb:5} can be further experessed as follows:
   \begin{problem}\label{pb:6} 
  \begin{alignat}{2}
\underset{x_m^{\rm Pin} , u}{\rm{max}}  &\quad   u \label{1tst:6}
\\  \label{2tst:6} s.t. &\quad  u - \log   \left(1- e^{-\phi\left((x_{m-1}^{\rm Pin}-x_{m})^2+y_{m}^2+d^2\right) }  \right) \\&-\nonumber \log \left(1- e^{-\phi\left((x_{m+1}^{\rm Pin}-x_{m})^2+y_{m}^2+d^2\right) }  \right)\\  &- \log \log\left(
1+ \rho \frac{\eta}{(x_m^{\rm Pin}-x_m)^2+y_m^2 +d^2}
\right)\leq0 , \nonumber \\ &\quad  x_m^s\leq x_m^{\rm Pin}  \leq x_{m}^e , \quad 1\leq m \leq M.
  \end{alignat}
\end{problem}

Problem \ref{pb:6} is non-convex, mainly due to its non-convex constraints \eqref{2tst:6}. In the following, the principle of SCA is applied to find a suboptimal solution of the considered optimization problem \cite{11036619,8502790}. 

To facilitate the implemenation of SCA, by defining $ f_{m1}(z)=\log \left(1- e^{-\phi\left(z^2+y_{m}^2+d^2\right) }  \right)$
and $f_{m2}(z)=\log \log\left(
1+ \rho \frac{\eta}{(z-x_m)^2+y_m^2 +d^2}
\right)$, problem \ref{pb:6} can be recast as follows:
   \begin{problem}\label{pb:7} 
  \begin{alignat}{2}
\underset{x_m^{\rm Pin} , u}{\rm{max}}  &\quad   u \label{1tst:7}
\\ \label{12tst:6}  s.t. &\quad  u -f_{m1}\left(x_{m-1}^{\rm Pin}-x_{m}\right)-f_{m1}\left(x_{m+1}^{\rm Pin}-x_{m}\right)  \\\nonumber &\quad -f_{m2}\left(x_{m}^{\rm Pin}\right)\leq0 ,  \\ &\quad  x_m^s\leq x_m^{\rm Pin}  \leq x_{m}^e , \quad 1\leq m \leq M.
  \end{alignat}
\end{problem}

For an efficient implementation of SCA, the convexity/concavity of the two functions, $f_{m1}(z)$ and $f_{m2}(z)$, is analyzed first. The first-order derivative of $f_{m1}(z)$ is given by 
 \begin{align} 
f_{m1}'(z) = \frac{2\phi z e^{-\phi (z^2 + y_m^2 + d^2)}}{1 - e^{-\phi (z^2 + y_m^2 + d^2)}},
\end{align}
and its second-order derivative is given by
\begin{align} 
f_{m1}''(z)
= 2\phi e^{-\phi(z^2 + y_m^2 + d^2)} 
\frac{1 - e^{-\phi(z^2 + y_m^2 + d^2)}-2\phi z^2}{(1 - e^{-\phi(z^2 + y_m^2 + d^2)})^2}
 .
\end{align}
The sign of $f_{m1}''(z)$ is mainly determined by  $f_{m3}(z^2) \triangleq  1 - e^{-\phi(z^2 + y_m^2 + d^2)}-2\phi z^2$. We note that $f_{m3}(x)=0$, $x\geq 0$, has a single root, as explained in the following. $f_{m3}(x)$ is a monotonically decreasing function of $x$, $x\geq0$, since its first-order derivative is strictly negative:
\begin{align}
f'_{m3}(x) =  \phi e^{-\phi(x + y_m^2 + d^2)}-2\phi <0. 
\end{align}
Given the fact that $f_{m3}(0)>0$ and $f_{m3}(\infty)\rightarrow -\infty$, $1 - e^{-\phi(z^2 + y_m^2 + d^2)}-2\phi z^2=0$ has a single root, which is denoted by $\beta_m$ and given by
\begin{align}
\beta^*=\sqrt{\frac{1+2W_{0}\!\left(-\tfrac12\,e^{-\frac{1}{2}-\phi(y_m^2+d^2)}\right)}{2\phi}},
\end{align}
where $W_0(\cdot)$ denotes the Lambert function \cite{6559999}. 
For our considered scenarios with moderate user separation, we observe that $\left|x_{m-1}^{\rm Pin}-x_{m}\right|\leq \beta^*$ and $\left|x_{m+1}^{\rm Pin}-x_{m}\right|\leq \beta^*$, which means that $f_{m1}(z)$ is a convex function. This motivates the following approximations:
\begin{align}
f_{m1}\left(x_{m-1}^{\rm Pin}-x_{m}\right)  &\approx \tilde{f}_{m1}\left(x_{m-1}^{\rm Pin},x_{m-1}^{\rm Pin,0} \right)  
,\\\nonumber
f_{m1}\left(x_{m+1}^{\rm Pin}-x_{m}\right)  &\approx \bar{f}_{m1}\left(x_{m+1}^{\rm Pin},x_{m+1}^{\rm Pin,0} \right)  ,
\end{align}
where $x_{m-1}^{\rm Pin,0} $ and $x_{m+1}^{\rm Pin,0} $ are the estimates from the previous iteration of the proposed SCA algorithm, 
$\tilde{f}_{m1}\left(x_{m-1}^{\rm Pin}-x_{m}\right)  \triangleq f_{m1}\left(x_{m-1}^{\rm Pin,0}-x_{m}\right)  +f_{m1}'\left(x_{m-1}^{\rm Pin}-x_{m}\right) \left(x_{m-1}^{\rm Pin}-x_{m-1}^{\rm Pin,0}\right)$ and $\bar{f}_{m1}\left(x_{m+1}^{\rm Pin}-x_{m}\right) \triangleq f_{m1}\left(x_{m+1}^{\rm Pin,0}-x_{m}\right) +f_{m1}'\left(x_{m+1}^{\rm Pin}-x_{m}\right) \left(x_{m+1}^{\rm Pin}-x_{m+1}^{\rm Pin,0}\right).$

On the other hand, we note that, at high SNR, $f_{m2}(z)$ is a concave function for most considered scenarios, and hence there is no need to apply SCA for the term $f_{m2}(z)$. In particular, at high SNR, $f_{m2}(z)$ can be approximated as follows:
\begin{align}
f_{m2}(z)\approx \log\left[\log \rho \eta -  \log\left( (z-x_m)^2+y_m^2 +d^2
\right)\right].
\end{align}
The first-order derivative of $f_{m2}(z)$ is given by
\begin{align}
f_{m2}'(z)= -\frac{2(z-x_m)}{\left((z-x_m)^2+y_m^2+d^2 \right)\log\big(\tfrac{\rho \eta}{(z-x_m)^2+y_m^2+d^2 }\big)},
\end{align}
and its second-order derivative is shown in \eqref{fm2z} at the top of this page, where the last step of \eqref{fm2z} follows by the high SNR approximation.

\begin{figure*}[!]\vspace{-2em}
{\small \begin{align} \label{fm2z}
f_{m2}''(z)
=&\frac{
4(z-x_m)^2\!\left(\log\!\dfrac{\rho\eta}{(z-x_m)^2+y_m^2+d^2}-1\right)
-2\left((z-x_m)^2+y_m^2+d^2\right)\!
\log\!\dfrac{\rho\eta}{(z-x_m)^2+y_m^2+d^2}
}{
\left((z-x_m)^2+y_m^2+d^2\right)^2\,
\left[\log\!\dfrac{\rho\eta}{(z-x_m)^2+y_m^2+d^2}\right]^2
} \\\nonumber
\approx &\frac{
\left(2(z-x_m)^2 
-y_m^2-d^2 \right)
\log\dfrac{\rho\eta}{(z-x_m)^2+y_m^2+d^2}
}{
\left((z-x_m)^2+y_m^2+d^2\right)^2\,
\left[\log\!\dfrac{\rho\eta}{(z-x_m)^2+y_m^2+d^2}\right]^2
}.
\end{align}}\vspace{-2em}
\end{figure*}

For most considered communication scenarios, it is expected that a pinching antenna is placed not too far away from the user it serves, i.e., 
 \begin{align}
 2\left(x_{m}^{\rm Pin}-x_m\right)^2 
-y_m^2-d^2\leq 0,
 \end{align}
 which means $f_{m2}(x_{m}^{\rm Pin})$ is a concave function of $x_{m}^{\rm Pin}$ and hence there is no need to apply SCA for this term.

In summary, the SCA-based algorithm is carried out in an iterative manner. Naturally, the initialization of the optimization parameters, denoted by $x_m^{\rm Pin, (0)}$, can be obtained as follows: $
x_m^{\rm Pin (0)}=x_m$. At the $k$-th iteration of the proposed SCA algorithm, by using the estimates of the antenna locations from the previous iteration, $x_m^{{\rm Pin}, (k-1)}$,  problem \ref{pb:5} can be approximated as follows:
   \begin{problem}\label{pb:9} 
  \begin{alignat}{2}
\underset{x_m^{\rm Pin} , u}{\rm{max}}  &\quad   u \label{1tst:9}
\\  \label{2tst:9} s.t. &\quad  u -\tilde{f}_{m1} \left(x_{m-1}^{\rm Pin},x_{m-1}^{{\rm Pin},(k-1)} \right)    \\\nonumber &\quad  -\bar{f}_{m1}\left(x_{m+1}^{\rm Pin},x_{m+1}^{{\rm Pin},(k-1)} \right) -f_{m2}\left(x_{m}^{\rm Pin}\right)\leq0 ,  \\ &\quad  x_m^s\leq x_m^{\rm Pin}  \leq x_{m}^e , \quad 1\leq m \leq M.
  \end{alignat}
\end{problem}  
For users which are at the boundary of the overall service area, they have a single neighbouring user, and the corresponding approximations can be straightforwardly obtained.

{\it Remark 8:} Recall that a key step of SCA is to construct convex surrogate functions for the original constraints \cite{11036619,8502790}.  If $f_{m1}(z)$ is convex and $f_{m2}(x)$ is concave, the approximated function in \eqref{2tst:9} is indeed an upper bound on the original constraint function in \eqref{12tst:6}, which ensures that the solution obtained in each iteration is still a feasible solution of problem \ref{pb:7}. When the service area is large or the SNR is low, $f_{m1}(z)$ may become non-convex, and $f_{m2}(x)$ may lose concavity. Nevertheless, our extensive computer simulations show that the proposed SCA algorithm remains robust across various system parameter settings and consistently achieves near-optimal performance.



 
  \begin{figure}[!] \vspace{-1em}
\begin{center}
\subfigure[Uplink, $\phi=0.02$ ]{\label{fig4a}\includegraphics[width=0.38\textwidth]{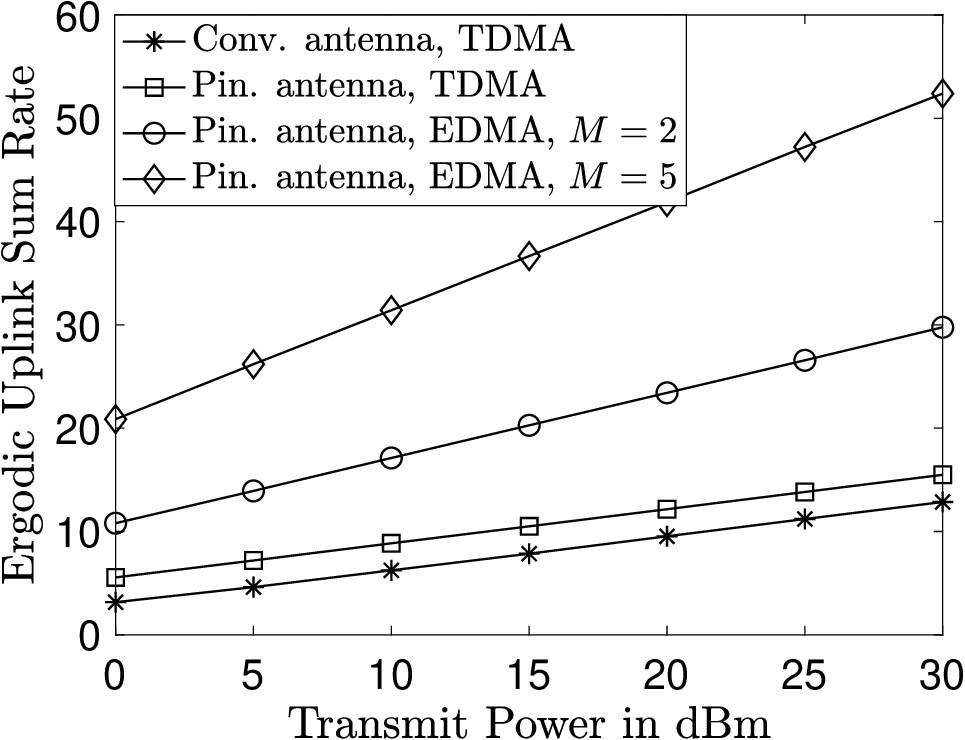}} 
\subfigure[Uplink, $\phi=0.04$ ]{\label{fig4b}\includegraphics[width=0.38\textwidth]{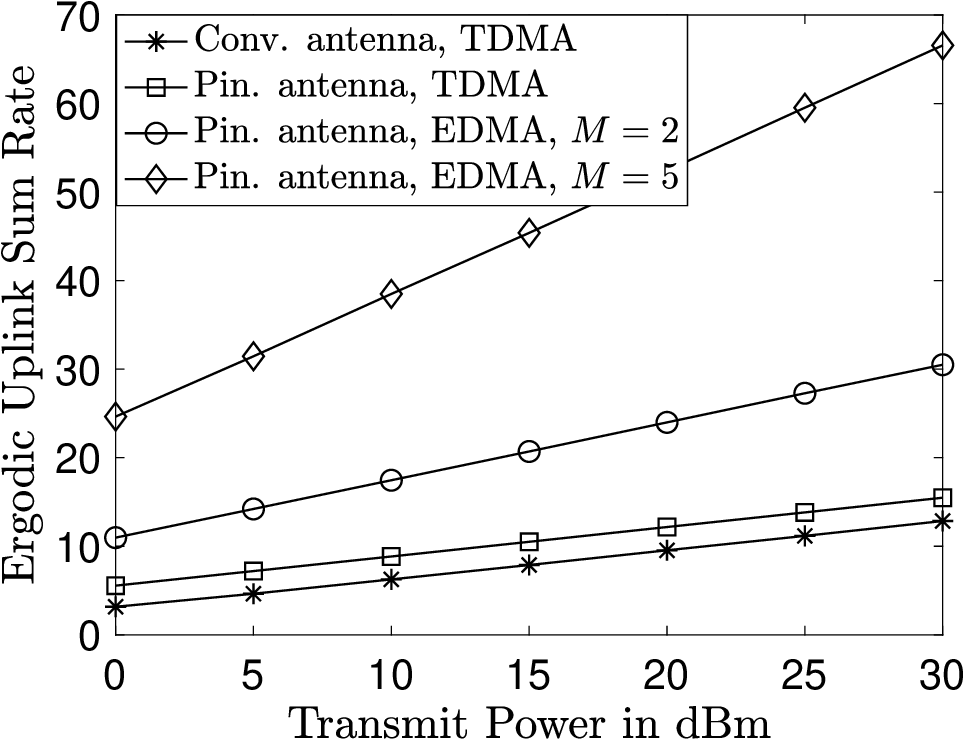}}  
\subfigure[Downlink, $\phi=0.02$ ]{\label{fig4c}\includegraphics[width=0.4\textwidth]{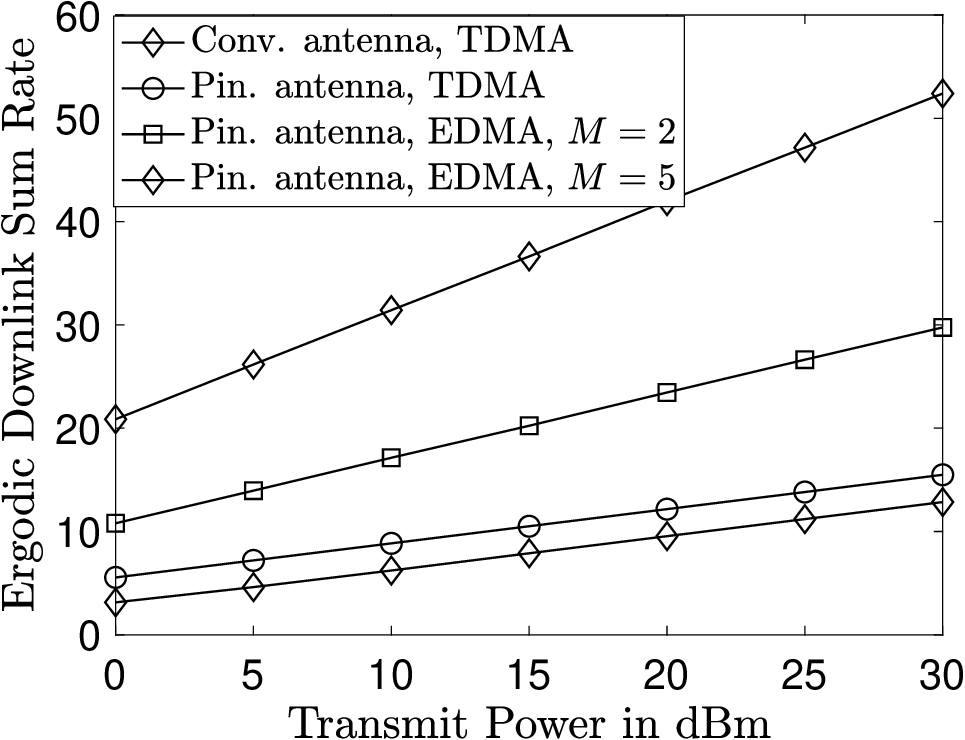}}\vspace{-1em}
\end{center}
\caption{ The uplink and downlink ergodic sum rates achieved by the proposed EDMA scheme, where the TDMA schemes assisted by conventional and pinching antennas are used as the benchmarking schemes, and $D_{\rm L} =4D_{\rm W} $.   \vspace{-1em} }\label{fig4}\vspace{-1.2em}
\end{figure}

\section{Numerical Results} \label{section simulations}
In this section, computer simulation results are presented to evaluate the performance of the proposed EDMA scheme as well as the accuracy of the developed analytical results. For all conducted simulations, $f_c=28$ GHz, $D_{\rm W}=10$ m, $d=3$ m, and the noise power is $-90$ dBm.

Fig. \ref{fig4} shows the uplink and downlink ergodic sum rates achieved by the proposed EDMA scheme, where $\boldsymbol{\psi}^{\rm Pin}_{m} =(x_m,0,d)$ is used for the pinching antenna locations. TDMA assisted by conventional and pinching antennas, respectively, is used as a benchmark to facilitate performance comparison. In particular, for conventional-antenna assisted TDMA, the antenna of the base station is deployed at the center of the service area, i.e., its location is $(0,0,d)$, and each user is served in a dedicated time slot. For pinching-antenna assisted TDMA, in the $m$-th time slot, a single pinching antenna is activated at $(x_m, 0, d)$ to serve ${\rm U}_m$. As can be seen from Fig. \ref{fig4a}, the use of EDMA yields a significant uplink sum rate gain over the TDMA benchmarking schemes, which is due to the fact that the $M$ users can be simultaneously served by EDMA. Comparing Fig. \ref{fig4a} to \ref{fig4b}, one can observe that the performance gain of EDMA over TDMA can be improved by increasing $\phi$. This observation is due to the fact that a large $\phi$ effectively suppresses co-channel interference, i.e., the LoS link between ${\rm U}_i$ and ${\rm PA}_m$, $m\neq i$, is likely to be blocked. We note that the ergodic downlink sum rate shown in Fig. \ref{fig4c} is identical to the ergodic uplink sum rate shown in Fig. \ref{fig4a}, which confirms Corollary \ref{corollary1}. Fig. \ref{fig4} also shows that the sum rate of EDMA increases by increasing the number of users from $2$ to $5$, which can be explained as follows. With EDMA, LoS blockages can be utilized to partition the service area into multiple small regions, within each of which a user can be served with suppressed multiple-access interference. However, it is important to point out that further increasing the number of users can decrease the sum rate of EDMA, as shown in Table \ref{table1}, where C-TDMA denotes conventional-antenna assisted TDMA, P-TDMA denotes pinching-antenna assisted TDMA, $\phi=0.02$, and the transmit power is $30$ dBm. This observation is due to the fact that multiple-access interference becomes severe when there are too many users in the service area.

 \begin{table}[!]
\centering
\caption{Impact of $M$ on the ergodic uplink sum rate of EDMA \vspace{-1em}}
\begin{tabular}{*7c}
\toprule 
&    $M=3$&$M=5$  & $M=7$ & $M=9$&$M=11$     \\
    \hline
   C-TDMA    &$4.87  $&$  3.01  $&$  2.18$ &$1.70  $&$    1.39  $ 
    \\
    \hline
    P-TDMA     &$5.15  $&$  3.09  $&$  2.21  $  &$1.72  $&$   1.40     $  \\
    \hline
  EDMA   &$41.07  $&$  51.80 $&$   50.28$ &$43.91  $&$  37.15    $  \\
\bottomrule
\end{tabular}\label{table1}\vspace{-1em}
\end{table}

     \begin{figure}[t]\centering \vspace{-0.2em}
    \epsfig{file=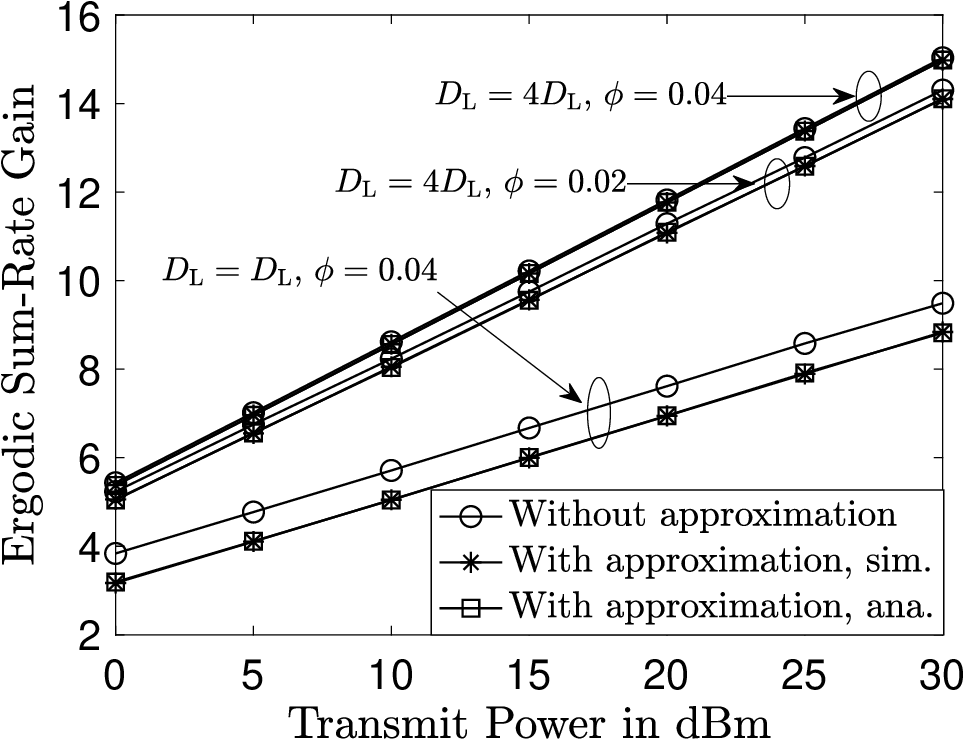, width=0.38\textwidth, clip=}\vspace{-0.5em}
\caption{The ergodic sum-rate gain achieved by EDMA over pinching-antenna assisted TDMA, where the special case with $M=2$ is focused on. The approximation is based on \eqref{approx}, and the shown analytical results are based on Lemma \ref{lemma1}. 
  \vspace{-1em}    }\label{fig5}   \vspace{-0.5em} 
\end{figure}

     \begin{figure}[!]\centering \vspace{-0.2em}
    \epsfig{file=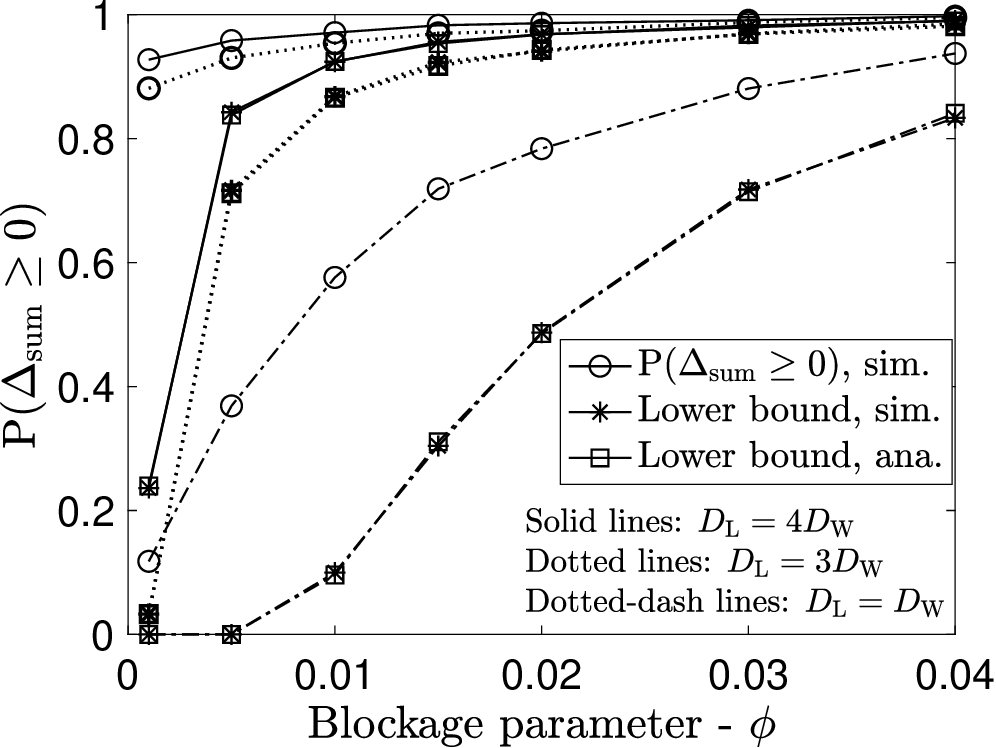, width=0.38\textwidth, clip=}\vspace{-0.5em}
\caption{The probability of EDMA outperforming pinching-antenna assisted TDMA, $\mathbb{P} (\Delta_{\rm sum}\geq 0)$, where $M=2$, $y_m=0$, $1\leq m\leq M$, and the transmit power is $30$ dBm. The analytical results are based on Lemma \ref{lemma3}.  
  \vspace{-1em}    }\label{fig6}   \vspace{-1em} 
\end{figure}

In Figs. \ref{fig5} and \ref{fig6}, the sum-rate gain of EDMA over TDMA, $\Delta_{\rm sum}$, is studied by focusing on the special case of $M=2$, where the antenna location for EDMA is give by $\boldsymbol{\psi}^{\rm Pin}_{m} =(x_m,0,d)$. In particular, Fig. \ref{fig5} shows the ergodic sum-rate gain as a function of the transmit power. As can be seen from the figure, the sum-rate gain of EDMA over TDMA is a monotonically increasing function of the transmit power, and can be further increased by increasing the size of the service area or the LoS blockage parameter $\phi$. In addition, Fig. \ref{fig5} shows that the approximation in \eqref{approx} is accurate for large $D_{\rm L}$ and $\phi$, and the fact that the analytical curve perfectly matches the simulated one verifies the accuracy of the analytical results in Lemma \ref{lemma1}. In Fig. \ref{fig6}, the probability for EDMA to achieve a larger sum rate than TDMA is investigated. As can be seen from the figure, increasing $D_{\rm L}$ or $\phi$ enhances the likelihood that EDMA outperforms TDMA, which is consistent with the observation made in Fig. \ref{fig5}. In addition, Fig. \ref{fig6} shows that the analytical and simulated curves coincide perfectly, which verifies the accuracy of the analytical results in Lemma \ref{lemma3}. 

   \begin{figure}[!] \vspace{-2em}
\begin{center}
\subfigure[ A general case with $M=5$ uniformly deployed users ]{\label{fig7a}\includegraphics[width=0.38\textwidth]{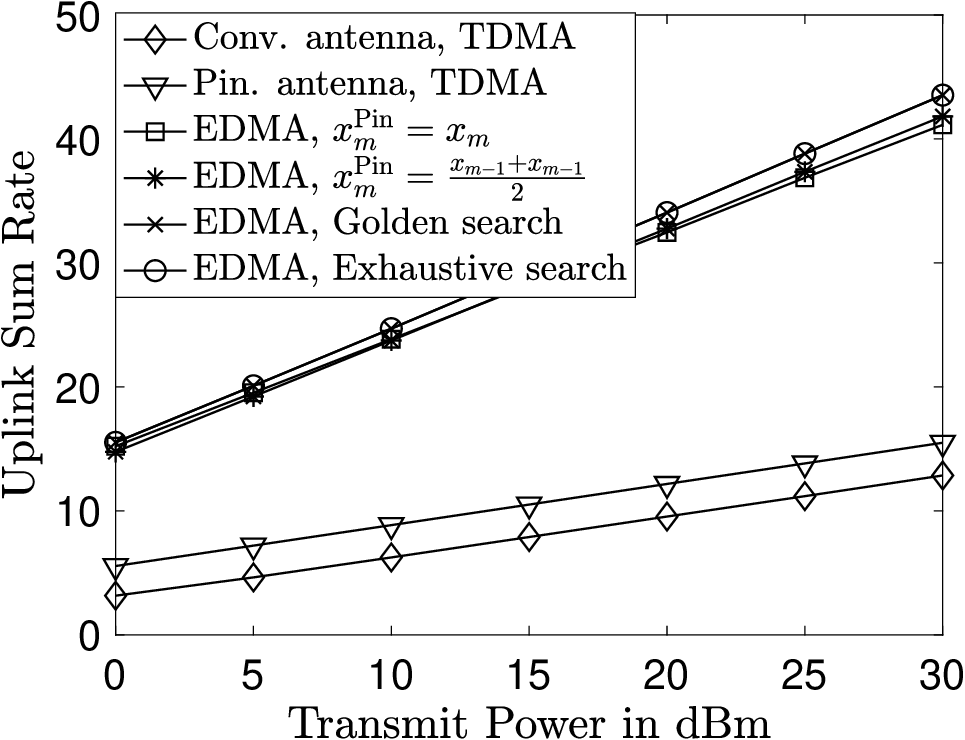}} 
\subfigure[A special case with $M=3$ clustered users ]{\label{fig7b}\includegraphics[width=0.38\textwidth]{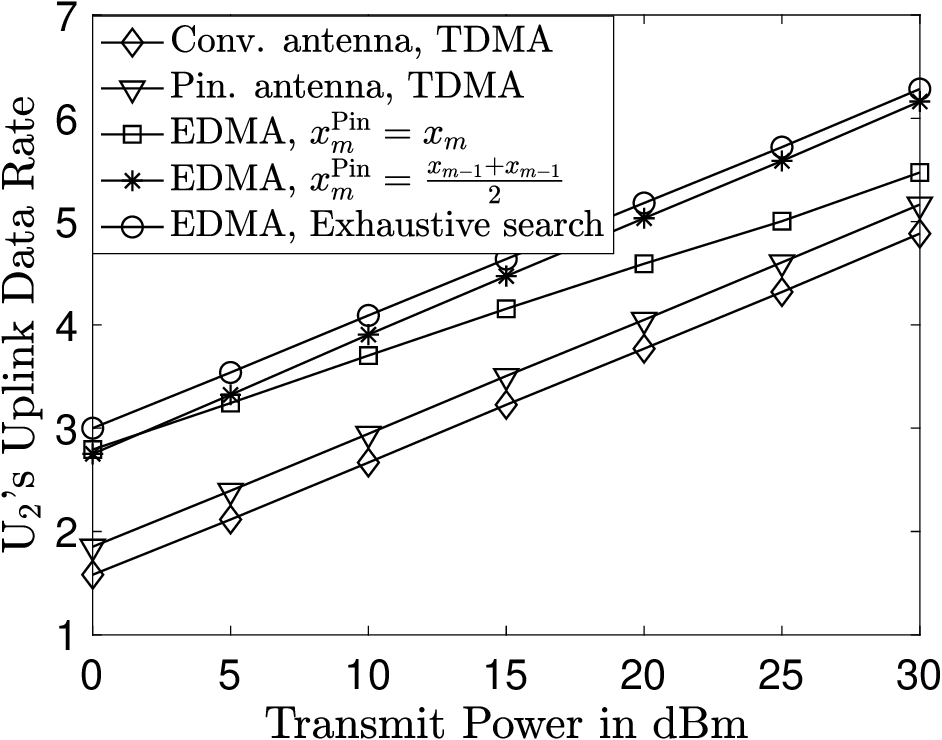}}   \vspace{-1em}
\end{center}
\caption{Impact of antenna location optimization on the uplink sum rate achieved by EDMA, where $D_{\rm L}=4D_{\rm W}$ and $\phi=0.02$. In Fig. \ref{fig7a}, the widths of ${\rm A}_i$, $i\in \{1,2,3\}$,  are identical, i.e., $\frac{D_{\rm L}}{M}$. In Fig. \ref{fig7b}, the widths of ${\rm A}_i$, $i\in \{1,3\}$,  are $\frac{D_{\rm L}}{4M}$, and the width of ${\rm A}_2$ is $\frac{D_{\rm L}}{M}$.  \vspace{-1em} }\label{fig7}\vspace{-1.2em}
\end{figure}

In Fig. \ref{fig7}, the impact of antenna location optimization on the uplink sum rate achieved by EDMA is investigated. In Fig. \ref{fig7a}, the sizes of all ${\rm A}_{m}$, $1\leq m \leq M$, are assumed to be identical, which means that the users are equally spaced in a statistical sense. In Fig. \ref{fig7b}, a clustered scenario is considered, where ${\rm U}_1$ and ${\rm U}_3$ are deployed in two narrow strips next to ${\rm U}_2$, and hence the three users can be very close. For both scenarios considered in Figs. \ref{fig7a} and \ref{fig7b},  the proposed golden section search algorithm realizes the same performance as an exhaustive search, which confirms the optimality of the proposed algorithm. If the users are equally spaced, the difference between the performance achieved by the two fixed choices, i.e., $x_m^{\rm Pin}=x_m$ and  $x_m^{\rm Pin}=\frac{x_{m-1}+x_{m+1}}{2}$, is small, as shown in Fig. \ref{fig7a}. However, if the users are close to each other, the use of   $x_m^{\rm Pin}=\frac{x_{m-1}+x_{m+1}}{2}$ can still yield a performance close to the optimal, but the choice of $x_m^{\rm Pin}=x_m$ can result in a signifiant performance loss at high SNR. 

   \begin{figure}[!] \vspace{-2em}
\begin{center}
\subfigure[$M=2$ and $D_{\rm L}=D_{\rm W}$ ]{\label{fig8a}\includegraphics[width=0.38\textwidth]{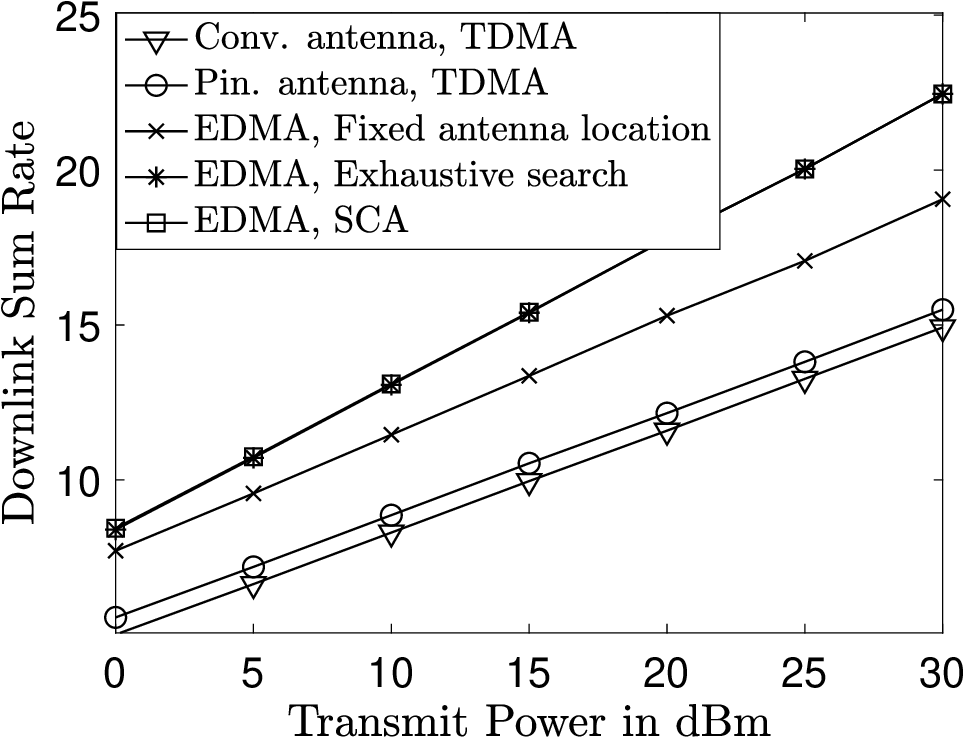}} 
\subfigure[$M=3$ and $D_{\rm L}=D_{\rm W}$ ]{\label{fig8b}\includegraphics[width=0.38\textwidth]{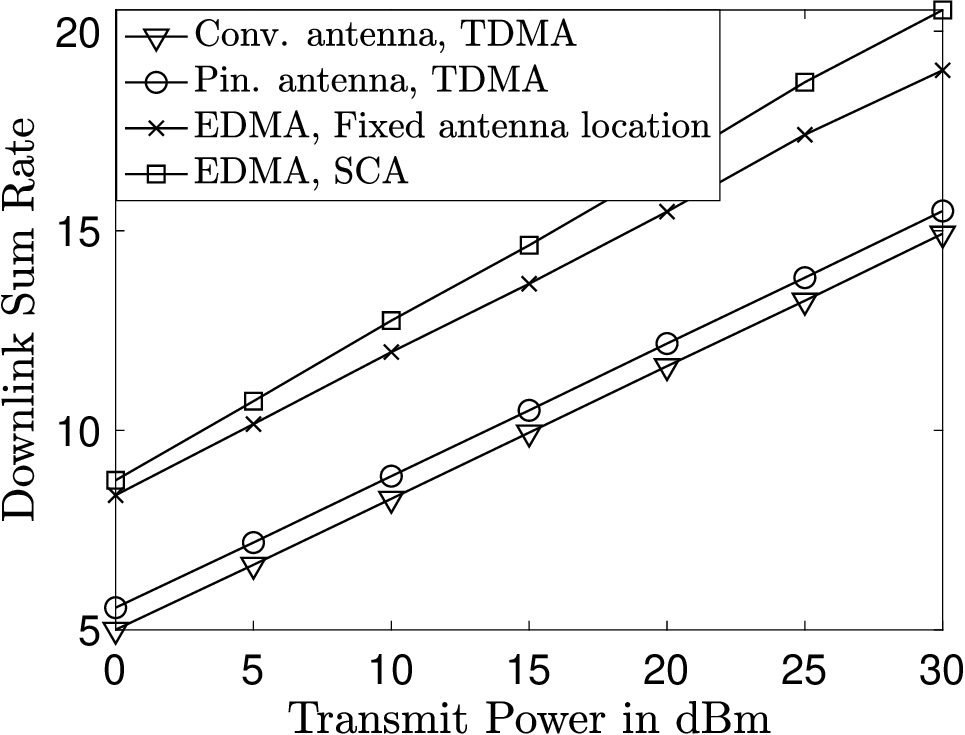}}   \vspace{-1em}
\end{center}
\caption{Impact of antenna location optimization on the downlink sum rate achieved by EDMA, where $\phi=0.02$ and the fixed antenna location scheme is based on $x_m^{\rm Pin}=x_m$.  The users are equally spaced in a statistical sense, similar to the scenario considered in Fig. \ref{fig7a}. \vspace{-1em} }\label{fig8}\vspace{-0.2em}
\end{figure}

In Fig. \ref{fig8}, the impact of antenna location optimization on the downlink sum rate achieved by EDMA is investigated, where the users are equally spaced in a statistical sense, a scenario similar to the one considered in Fig. \ref{fig7a}. In particular, Fig. \ref{fig8a} focuses on the case with $M=2$, where an exhaustive search is feasible due to the small search space. As can be seen from Fig. \ref{fig8a}, the performance of the proposed SCA algorithm is identical to that of the exhaustive search, which demonstrates the near-optimal performance of the proposed SCA scheme for this considered case. The use of the fixed location choice, $x_m^{\rm Pin}=x_m$, results in a performance loss compared to the SCA algorithm, as shown in Fig. \ref{fig8a}. When the number of users increases and the size of the service area stays the same, the use of SCA can still outperform the fixed-location scheme, but the performance gap between the SCA and the fixed-location scheme is reduced, as shown in Fig. \ref{fig8b}. Fig. \ref{fig9} shows that by increasing the size of the service area, the curves for the schemes with fixed antenna locations and the exhaustive search tend to coincide, which means that the performance of the fixed-location scheme becomes almost optimal if the users become widely separated. In addition, Fig. \ref{fig9} shows that for different choices of the service area size and the SNR, the proposed SCA approach consistently attains near-optimal performance.

     \begin{figure}[!]\centering \vspace{-0.2em}
    \epsfig{file=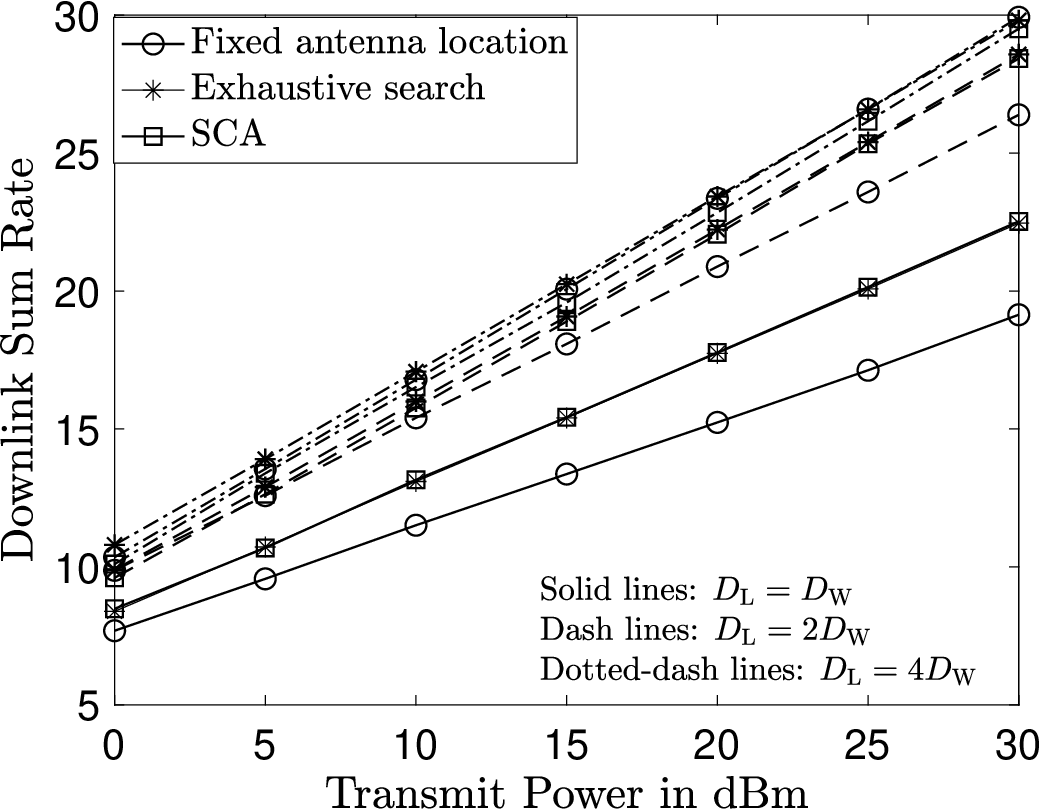, width=0.38\textwidth, clip=}\vspace{-0.5em}
\caption{The impact of the service area size on performance of EDMA, where $M=2$, and $\phi=0.02$. 
  \vspace{-1em}    }\label{fig9}   \vspace{-0.5em} 
\end{figure}

   \begin{figure}[!] \vspace{-0.2em}
\begin{center}
\subfigure[$D_{\rm L}=D_{\rm W}$ ]{\label{fig10a}\includegraphics[width=0.38\textwidth]{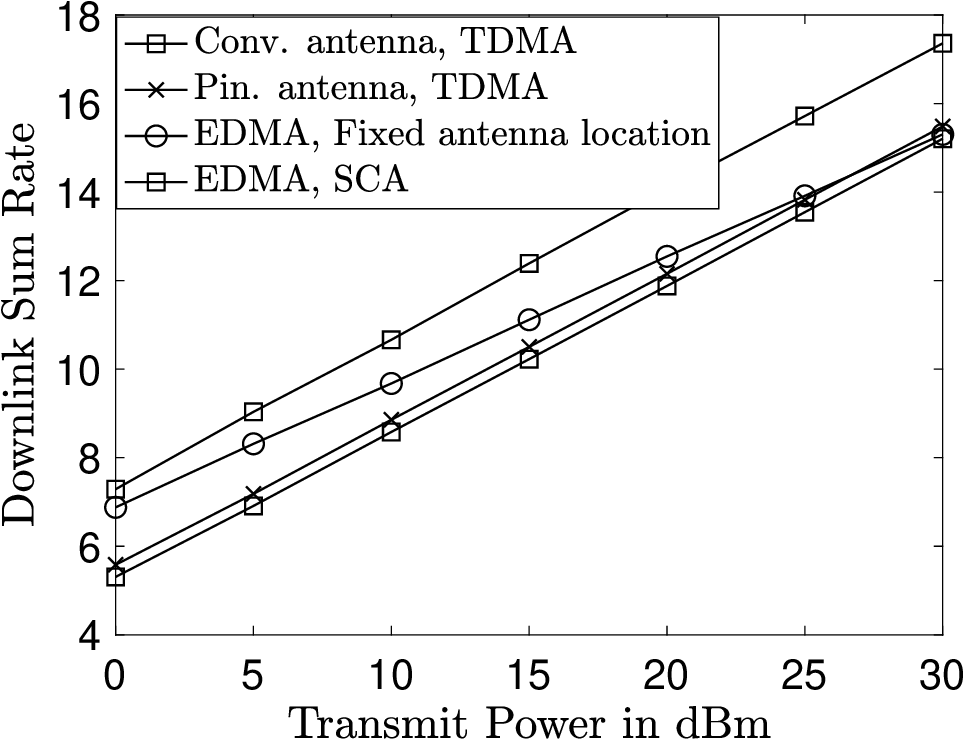}} 
\subfigure[ $D_{\rm L}=2D_{\rm W}$ ]{\label{fig10b}\includegraphics[width=0.38\textwidth]{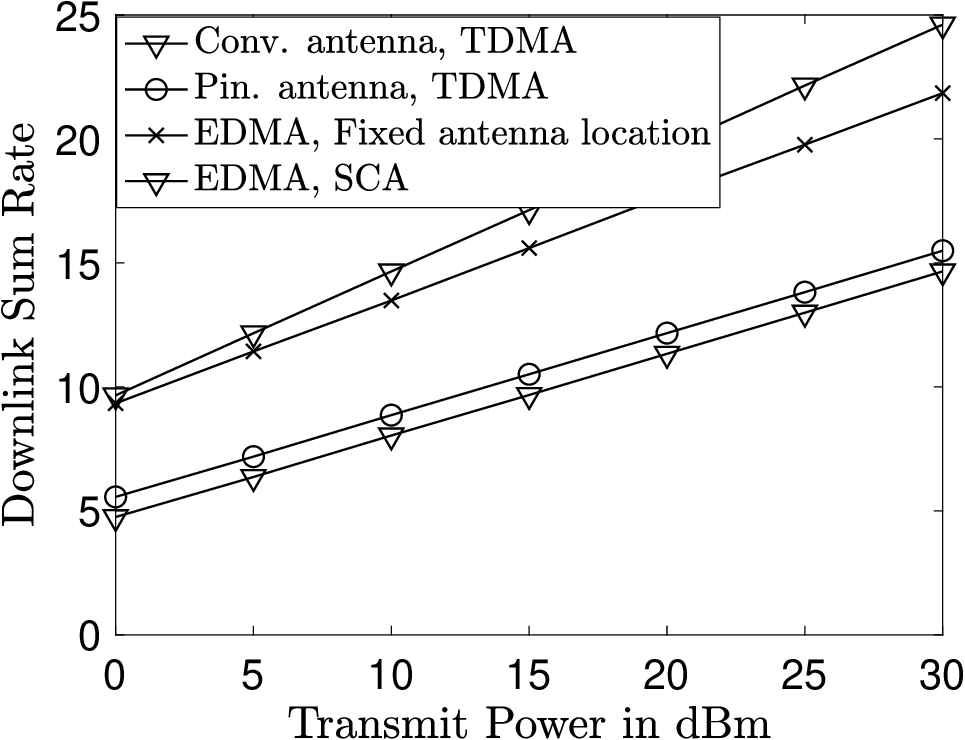}}   \vspace{-1em}
\end{center}
\caption{Impact of antenna location optimization on the downlink sum rate achieved by EDMA, where $M=3$ and $\phi=0.02$. The users are clustered as in the scenario considered in \ref{fig7b}.   \vspace{-1em} }\label{fig10}\vspace{-1.2em}
\end{figure}

Fig. \ref{fig10} considers the scenario in which the users are deployed close to each other, i.e., a scenario with clustered users similar to the one considered in Fig. \ref{fig7b}. Comparing Fig. \ref{fig8b} to Fig. \ref{fig10a}, one can observe that the performance of the fixed-location scheme is degraded severely and is even worse than that of the TDMA benchmark, since the users are close to each other and hence co-channel interference becomes severe. The use of SCA can effectively combat such co-channel interference, and there is still a significant performance gain over TDMA, as shown in Fig. \ref{fig10a}. When the size of the service area increases, both SCA and the fixed location scheme can ensure that the performance gap between EDMA over TDMA increases, as shown in Fig. \ref{fig10b}.  

\section{Conclusions}\label{section conclusions}
In this paper, a new multiple access technique, termed EDMA, has been proposed which exploits the dynamic features of wireless propagation environments. In particular, pinching-antenna-assisted EDMA has been focused on, where the multi-user propagation environment is intelligently reconfigured to improve the signal strength at intended receivers while simultaneously suppressing multiple-access interference. The key to creating a favorable propagation environment is to utilize the capability of pinching antennas for reconfiguring LoS links. Closed-form expressions for the ergodic sum-rate gain of EDMA over conventional multiple access have been developed to demonstrate the great potential of EDMA for supporting multi-user communications without requiring complex signal processing, e.g., precoding or multi-user detection. Furthermore, pinching antenna location optimization has been considered, where two low-complexity algorithms were developed for the uplink and downlink cases, respectively.  

In this paper, only LoS links were considered in the EDMA system model, given the fact that they are dominant over non-LoS (NLoS) links. Nevertheless, an important direction for future research is to investigate the performance and the system design of EDMA in the presence of NLoS links. In addition, the problem of optimizing the pinching antenna locations was solved by iterative methods, e.g., the golden section search algorithm and SCA. An important direction for future research is to obtain closed-form optimal solutions, which can provide a more insightful understanding of the performance of EDMA transmission \cite{closedformzid}. 

\appendices
\section{PROOF OF LEMMA \ref{lemma2}}\label{prooflemma2}

 It is straightforward to show that $g_1(a)$ is a monotonically increasing function of $a$, which means that $g\left( d^2+\rho\eta\right)-g\left( d^2\right)$ is always non-negative. Therefore, the lemma can be proved by focusing on the term $T_2=1-2e^{-\phi d^2  } T_1 \frac{\sqrt{\pi}}{D_{\rm W}\sqrt{\phi}} \Phi\left( \frac{\sqrt{\phi}D_{\rm W}}{2}\right)   $, which can be expressed as follows:
\begin{align}
T_2=&1-2e^{-\phi d^2  }   \frac{\sqrt{\pi}}{D_{\rm W}\sqrt{\phi}} \Phi\left( \frac{\sqrt{\phi}D_{\rm W}}{2}\right)  \\\nonumber &\times \frac{2\sqrt{\pi}}{D_{\rm L}\sqrt{\phi}} \left[\Phi(\sqrt{\phi} D_{\rm L})-\Phi \Big(\frac{\sqrt{\phi} D_{\rm L}}{2}\Big)\right]\\ \label{T1xx} &
+\frac{2}{\phi D_{\rm L}^2} \left(1-2e^{-\frac{\phi D_{\rm L}^2}{4}}+e^{-\phi D_{\rm L}^2}\right) .
\end{align} 
Recall that  for $x\rightarrow \infty$, the probability integral $\Phi(x)$ can be approximated as follows: \cite{GRADSHTEYN}
  \begin{align}
 \Phi(x) \approx 1-\frac{e^{-x^2}}{\sqrt{\pi}x}\left[
 \sum^{n}_{k=0}(-1)^k\frac{(2k-1)!!}{(2x^2)^k}
 \right] \approx 1-\frac{e^{-x^2}}{\sqrt{\pi}x} .
 \end{align}
 By using the above approximation, for the case with  $\sqrt{\phi} D_{\rm W}\rightarrow \infty$ and  $\sqrt{\phi} D_{\rm L}\rightarrow \infty$, $T_2$ can be approximated as follows:
 \begin{align}
T_2\approx &1-2e^{-\phi d^2  }   \frac{\sqrt{\pi}}{D_{\rm W}\sqrt{\phi}} 
\left(
1-\frac{e^{-\frac{ {\phi}D^2_{\rm W}}{4}}}{\sqrt{\pi}\frac{\sqrt{\phi}D_{\rm W}}{2}} 
\right)
  \\\nonumber &\times \frac{2\sqrt{\pi}}{D_{\rm L}\sqrt{\phi}} \left[
  \frac{e^{-\frac{ {\phi} D^2_{\rm L}}{4}}}{\sqrt{\pi}\frac{\sqrt{\phi} D_{\rm L}}{2}} -\frac{e^{- {\phi} D_{\rm L}^2}}{\sqrt{\pi}\sqrt{\phi} D_{\rm L}} 
 \right] 
+\frac{2}{\phi D_{\rm L}^2} \\\nonumber
\approx&1 
+\frac{2}{\phi D_{\rm L}^2}   \geq 0,
\end{align} 
where $e^{-x}\rightarrow 0$ for $x\rightarrow \infty$ was used. Because $T_2\geq 0$ and $g\left( d^2+\rho\eta\right)-g\left( d^2\right)\geq 0$, $\bar{\Delta}_{\rm sum}^{\rm LB}  \geq 0$ for the case of  $\sqrt{\phi} D_{\rm W}\rightarrow \infty$ and  $\sqrt{\phi} D_{\rm L}\rightarrow \infty$. 

On the other hand, in order to analyze the case with  $\sqrt{\phi} D_{\rm W}\rightarrow 0$ and  $\sqrt{\phi} D_{\rm L}\rightarrow 0$, we recall that for $x\rightarrow 0$, the probability integral $\Phi(x)$ can be approximated as follows: \cite{GRADSHTEYN}
 \begin{align}
 \Phi(x) =\frac{2}{\sqrt{\pi}}\sum^{\infty}_{k=1}(-1)^{k+1}\frac{x^{2k-1}}{(2k-1)(k-1)!}\approx \frac{2}{\sqrt{\pi}} x,
 \end{align}
 which means that $T_2$ can be approximated as follows:
 \begin{align}
T_2\approx &1-2e^{-\phi d^2  }   \frac{\sqrt{\pi}}{D_{\rm W}\sqrt{\phi}} 
\frac{2}{\sqrt{\pi}}  \frac{\sqrt{\phi}D_{\rm W}}{2}
  \\\nonumber &\times \frac{2\sqrt{\pi}}{D_{\rm L}\sqrt{\phi}} \left[
  \frac{2}{\sqrt{\pi}}   \sqrt{\phi} D_{\rm L}-
  \frac{2}{\sqrt{\pi}}   \frac{\sqrt{\phi} D_{\rm L}}{2}\right]\\ \label{T1dd} &
+\frac{2}{\phi D_{\rm L}^2} \left(1-2\left(1 -\frac{\phi D_{\rm L}^2}{4}\right)+1 -\phi D_{\rm L}^2\right) ,
\end{align} 
where the approximation $e^{-x}\approx 1-x$ for small $x$ is used.  With some straightforward algebraic manipulations, $T_2$ can be approximated as follows:
  \begin{align}
T_2\approx &-4e^{-\phi d^2  }  \leq 0,
\end{align} 
which means that $\bar{\Delta}_{\rm sum}^{\rm LB} \leq 0$ for the case with  $\sqrt{\phi} D_{\rm W}\rightarrow 0$ and  $\sqrt{\phi} D_{\rm L}\rightarrow 0$. Therefore, the proof for the lemma is complete. 

\section{PROOF OF LEMMA \ref{lemma4}}\label{prooflemma4}

In order to obtain the condition under which $f(x)$ is a unimodal function, we first decompose $f(x)$ as follows: 
\begin{align}
f(x) = & \gamma_1(x)   \log_2\left(
1+  \frac{\rho\eta}{ (x-x_{m})^2+ y_m^2 +d^2}
\right),
\end{align}
 where
  \begin{align}
 \gamma_1(x)=& \left(1- e^{-\phi\left((x_{m+1}-x )^2 +d^2\right) }  \right) \left(1-e^{-\phi\left( (x-x_{m-1})^2+d^2 \right)}\right). 
 \end{align} 
The proof can be completed by first showing that the term $\log_2\left(
1+  \frac{\rho\eta}{ (x_m^{\rm Pin}-x_{m})^2+ y_m^2 +d^2}
\right)$ is insignificant to the unimodality analysis at high SNR, and then studying the unimodal property of $
\gamma_1(x)$.
\subsubsection{The impact of the high-SNR assumption}  The unimodality of $f(x)$ can be equivalently studied by focusing on $\log f(x)$, whose first-order derivative can be expressed as in \eqref{dlogfx} at the top of the next page. 
\begin{figure*}\vspace{-2em}
\begin{align}\nonumber
\frac{d \log {f}(x)}{dx} 
=&\frac{ \gamma_1'(x) }{ \gamma_1(x) }-\frac{1}{\ln 2}   \frac{2\rho \eta(x-x_m)}{\big[\big((x-x_m)^2+d^2\big)^2 + \rho \eta ((x-x_m)^2+d^2)\big] \log_2\left(
1+  \frac{\rho\eta}{ (x-x_{m})^2  +d^2}
\right)} 
\\\label{dlogfx}
\approx&  \frac{ \gamma_1'(x) }{ \gamma_1(x) } -\frac{1}{\log_2 
 (\rho \eta)\ln 2}   \frac{2 (x-x_m)}{ ((x-x_m)^2+d^2)  
 }.
\end{align}\vspace{-2em}
\end{figure*}

On the one hand,  \eqref{dlogfx} shows that at high SNR, i.e., $\rho\rightarrow \infty$, $\frac{1}{\log_2 
 (\rho \eta)\ln 2}   \frac{2 (x-x_m)}{ ((x-x_m)^2+d^2)  
 }\rightarrow 0$. On the other hand, we note that $ \gamma_1(x)$ is not a function of $\rho$, which means that the term $\frac{ \gamma_1'(x) }{ \gamma_1(x) } $ is the dominant one in \eqref{dlogfx} at high SNR, i.e., $ \gamma_1(x)$ is the dominant term to decide the unimodality of $f(x)$.
 
 
 \subsubsection{Analyzing the unimodality of $\gamma_1(x)$}
 To facilitate the performance analysis, define ${x}_0=\frac{x_{m-1}+x_{m+1}}{2}$, $\delta=\frac{x_{m+1}-x_{m-1}}{2}$, and $x-x_0=w$, which means that $ \gamma_1(x)$ can be rewritten as follows: 
 \begin{align}\nonumber
 \gamma_1(w)=& \left(1- e^{-\phi\left((\delta-w )^2 +d^2\right) }  \right)   \left(1-e^{-\phi\left( (\delta+w)^2 +d^2 \right)}\right)\\   &= g(\delta-w)g(\delta+w),
  \end{align}
for $-\frac{\delta}{2}\leq w\leq \frac{\delta}{2}$, where $g(y)$ is defined as follows:
\begin{align}
g(y)=& \left(1- e^{-\phi\left(y^2 +d^2\right) }  \right)  .
\end{align}
We note that $ \gamma_1(w)$ is an even function. Therefore, in order to show that $\gamma_1(x)$ is unimodal with a single maximum for $x_{m-1}\leq x\leq x_{m+1}$, it is equvalent to show that $ \gamma_1(w)$ is a monotonically increasing function for $-\frac{\delta}{2}\leq w< 0$, and a monotonically decreasing function for $0\leq w\leq \frac{\delta}{2}$. Due to space limitations, the monotonicity of $\gamma_1(w)$ for $0\leq w\leq \frac{\delta}{2}$ is focused on in the following.
 
It is straightforward to verify that the monotonicity property of $\gamma_1(w)$ is the same as that of $\log \gamma_1(w)$, which can be expressed as follows: 
\begin{align}
\log \gamma_1(w)=&\log \left(1- e^{-\phi\left((\delta-w )^2 +d^2\right) }  \right) \\\nonumber &+ \log \left(1-e^{-\phi\left( (\delta+w)^2 +d^2 \right)}\right).
\end{align}
The first-order derivative of $\log \gamma_1(w)$ is given by 
\begin{align}
\frac{d \log \gamma_1(w)}{dw}=&- \frac{2\phi (\delta-w) e^{-\phi\left((\delta-w )^2 +d^2\right) } }{1- e^{-\phi\left((\delta-w )^2 +d^2\right) } } \\\nonumber &+\frac{2\phi (\delta+w)e^{-\phi\left( (\delta+w)^2 +d^2 \right)}}{1-e^{-\phi\left( (\delta+w)^2 +d^2 \right)}}\\\nonumber
=&- \frac{2\phi (\delta-w)   }{ e^{\phi\left((\delta-w )^2 +d^2\right) } -1}  +\frac{2\phi (\delta+w) }{ e^{\phi\left( (\delta+w)^2 +d^2 \right)}-1}.
\end{align}
To show that $\gamma_1(w)$ is a monotonically decreasing function for $0\leq w\leq \frac{\delta}{2}$, it is sufficient to show that $\frac{d \log \gamma_1(w)}{dw}$ is always non-positive, which means that the following inequality needs to be established:
\begin{align}\label{middl1}
 \frac{2\phi (\delta-w)   }{ e^{\phi\left((\delta-w )^2 +d^2\right) } -1}  \geq\frac{2\phi (\delta+w) }{ e^{\phi\left( (\delta+w)^2 +d^2 \right)}-1}. 
\end{align}
The above inequality can also be equivalently expressed as follows:
\begin{align}\label{middl2}
\gamma_3(\delta-w)\geq \gamma_3(\delta+w),
\end{align}
where $\gamma_3(y)= \frac{ y  }{ e^{\phi\left(y^2 +d^2\right) } -1}  $. 

Recall that  $0\leq w\leq \frac{\delta}{2}$, which means that $\delta-w\leq \delta+w$. Therefore, a sufficient condition for the inequality in \eqref{middl2} is that $\gamma_3(y)$
is a decreasing function of $y$, for $ \frac{\delta}{2}\leq y\leq  \frac{3}{2}\delta $. In order to analyze the monotonicity property of $\gamma_3(y)$,  $\gamma_3(y)$ is frist rewritten as follows:
\begin{align}\label{log3y}
  \gamma_3(y)=   \frac{ y  }{ h(s)}  ,
\end{align}
where $h(s)=e^{\phi s   } -1$ and $s=y^2 +d^2$. 
The first-order derivative of $  \gamma_3(y)$ is given by
\begin{align}
(  \gamma_3(y))'=& \frac{  1  }{ h(s)}  - \frac{  y  }{ h^2(s)} h'(s)s'\\\nonumber
=& \frac{  1  }{ h(s)}  -\frac{  2y^2  }{ h^2(s)} h'(s) .
\end{align}
For $\gamma_3(y)$ to be a decreasing function, $(  \gamma_3(y))'$ should be non-positive, i.e.,  
\begin{align}
  \frac{  1  }{ h(s)}  \leq&  \frac{  2y^2  }{ h^2(s)} h'(s)  ,
\end{align}
which can be shown by the following inequality related to $s$ only:
\begin{align} 
  1 \leq&  \frac{  2 (s-d^2)  }{ h(s)} h'(s)   .
\end{align}
By using the first-order derivative of $h(s)$, the above inequality can be expressed as follows:
\begin{align}  
  \frac{  2 (s-d^2)  }{ 1-e^{-\phi s   } }  \phi    &\geq 1.
\end{align}
Recall that $s\geq \frac{\delta^2}{4}+d^2$. Therefore, the term $  \frac{  2 (s-d^2)  }{ 1-e^{-\phi s   } }  \phi  $ can be lower bounded as follows: 
\begin{align}  
   \frac{  2 (s-d^2)  }{ 1-e^{-\phi s   } }  \phi  \geq   2 (s-d^2)   \phi  \geq    \frac{\delta^2\phi}{2}    .
\end{align}
 
Therefore, if $\frac{\delta^2\phi}{2}  \geq 1$, $\gamma_3(y)$ is a decreasing function, and hence $\frac{d   \gamma_1(w)}{dw}$ is always negative, which means that $\gamma_1(w)$ is indeed a monotonically decreasing function for $0\leq w\leq \frac{\delta}{2}$. The fact that $\gamma_1(w)$ is a monotonically increasing function for $-\frac{\delta}{2}\leq w\leq 0$ can be proved similarly. The proof of the lemma is complete. 

  \vspace{-0.5em}
\bibliographystyle{IEEEtran}
\bibliography{IEEEfull,trasfer}
  \end{document}